\newtheorem{theorem}{Theorem}
\newtheorem{lemma}[theorem]{Lemma}
\newtheorem{definition}[theorem]{Definition}
\newtheorem{remark}{Remark}
\newtheorem{prop}[theorem]{Proposition}
\def\calS{\mathcal{S}}
\def\calT{\mathcal{T}}
\numberwithin{equation}{section}
\begin{document}

\title{Superconcentrators of Density 25.3}

\author{
     Vladimir Kolmogorov$^{\tiny \dag}$ \\ {\normalsize\tt vnk@ist.ac.at}
\and Michal Rol\'inek$^{\tiny \dag}$ \\ {\normalsize\tt michal.rolinek@ist.ac.at}
\and \\ \\
\normalsize ~$^{\tiny \dag}$Institute of Science and Technology Austria \qquad
}
\date{}

\maketitle

\singlespacing

\begin{abstract} An $N$-superconcentrator is a directed, acyclic graph with $N$ input nodes and $N$ output nodes such that every subset of the inputs and every subset of the outputs of same cardinality can be connected by node-disjoint paths. It is known that linear-size and bounded-degree superconcentrators exist. We prove the existence of such superconcentrators with asymptotic density $25.3$ (where the density is the number of edges divided by $N$). The previously best known densities were $28$ \cite{Scho2006} and $27.4136$ \cite{YuanK12}. 
\end{abstract}

\section{Introduction}

\begin{definition} An $N$-\textit{superconcentrator} is a directed acyclic graph having exactly $N$ input nodes $I$ and $N$ output nodes $O$ with the following property: for every subset $S \subset I$ and every subset $T \subset O$ with $|S| = |T| = k$ there exist $k$ node-disjoint paths connecting the nodes in $S$ to the nodes in $T$ (in an arbitrary order).

The \textit{density} of an $N$-superconcentrator is the number of its edges divided by $N$.
\end{definition}

Superconcentrators of bounded degree and linear size have been known to exist \cite{Valiant1975, Pinsker73}. 
Their applications include lower bounds  of resolution proofs \cite{Urquhart1987,Schoning1997} and
constructions of graphs that are hard to pebble~\cite{paul1976,Lengauer82,Dziembowski:PoS}, which are
used e.g.\ in cryptographic protocols~\cite{DworkNW05,Dziembowski:PoS}.
In these applications it is important to have superconcentrators of smallest possible density. The best bounds for asymptotic densities have improved several times \cite{Pippenger77, Chung79, Bas81, Schoning2000} and now to our knowledge the best known bounds are $28$ \cite{Scho2006} and $27.4136$ \cite{YuanK12}. The smallest known density of an explicitly constructable superconcentrator is 44 \cite{AlonC03}. In this paper we show that $N$-superconcentrators of asymptotic density 25.3 exist. The best known lower bound for the asymptotic density is 5~\cite{LevValiant83}.

\vspace{3pt}
\noindent{\bf Overview of our techniques.~~}
We follow the construction of an $N$-superconcentrator $\Gamma_N$  introduced by Alon and Capalbo \cite{AlonC03}.
Its main building block is a bipartite graph $E_N$ with
certain properties. In \cite{AlonC03} this graph was required to be an {\em expander graph} with particular constants:

\begin{definition} Let $E_N$ be a bipartite graph with $N$ left vertices $L$ and $N$ right vertices $R$ and directed edges going from $L$ to $R$.
It is called an $(N,\alpha, \beta)$-\textit{expander graph} (where $\alpha, \beta \in [0,1]$) if for all subsets $S \subset L$ with $|S| = \lfloor \alpha N \rfloor$ it holds that:
$$|\Gamma(S)| \geq \lceil \beta N \rceil.$$
Here $\Gamma(S) \subset R$ is the set of neighbours of the nodes in $S$. 
\end{definition}

Sch{\"o}ning~\cite{Scho2006} showed that a random bipartite graph of degree $d=6$
satisfies the property in \cite{AlonC03} with high probability, thus proving the existence of a superconcentrator of asymptotic density $28$.

To get a smaller density, we show that the required expansion property of $E_N$ can be relaxed if the graph
satisfies an additional condition that we call a {\em pair expansion}. 
To describe the new condition, we assume that $N$ is even and the right vertices $R$ are grouped into pairs.  We say that a left vertex is \textit{adjacent to a pair} in $R$ if it is adjacent to at least one vertex in the pair. Similarly, a subset of left vertices $U \subset L$ is adjacent to a pair in $R$ if some $l \in U$ is adjacent to it.

\begin{definition} A directed bipartite graph with $L$ and $R$ as above and with vertices in $R$ grouped into pairs is a $(N,\alpha, \gamma)$-\textit{pair-expander} graph if for each $U \subset L$ with $|U| = k = \lfloor \alpha N \rfloor$ is adjacent to at least $\lfloor \gamma k \rfloor$ pairs.
\end{definition}

In the second part of the paper we prove that the new conditions
are satisfied with a high probability by a random bipartite graph of average degree $d=5.325$.
We follow the probabilistic argument of Bassalygo \cite{Bas81},
except that we use a fractional degree which presents an additional technical challenge.

Note that the argument in \cite{Bas81} uses an upper bound on the probability that a given subset $U\subset L$ does not satisfy
the expansion property.
As a side result, in Appendix~A we give an exact expression for this probability
as a sum with $O(N)$ terms. Our computational experiments, however, indicate that 
the bound is very close to the true value, and so we do not use this result
in our analysis.

\section{Construction}

We start by reviewing the construction of an $N$-superconcentrator $\Gamma_N$  of \cite{AlonC03} and \cite{Schoning2000}.
Graph $\Gamma_N$ for a sufficiently large $N$ is defined recursively as follows. 
Let $X$ and $Y$ be disjoint sets of $N$ vertices each. The input and output sets of $\Gamma_N$ are $X$ and $Y$, respectively. Let also $X' = \{x'_1,\dots, x'_N\}$ and $Y' = \{y'_1,\dots, y'_N\}$ be disjoint sets.

A copy of the graph $E_N$ discussed in the previous section is inserted between $X$ and $X'$. The resulting set of edges is called $\Lambda_X$;
these edges are directed from $X$ to $X'$.
Similarly, a copy of the {\em reverse} of graph $E_N$ is inserted between $Y'$ and $Y$, and
the resulting set of edges (directed from $Y'$ to $Y$) is called $\Lambda_Y$.

In addition, for each $i \in \{1,\dots, N/2\}$, the edges $(x'_{i+N/2}, y'_i)$, $(x'_{i+N/2}, x'_i)$, $(x'_i, y'_{i+N/2})$, and $(y'_i, y'_{i+N/2})$ are all in $\Gamma_N$.

Further let $X'' = \{x'_i \in X' | i \in \{1,\dots, N/2\}\}$ and $Y'' = \{y'_i \in Y' | i \in \{1,\dots, N/2\}\}$ and as edges between $X''$ and $Y''$ take edges of the superconcentrator $\Gamma_{N/2}$.

\begin{figure}
\begin{center}
\includegraphics{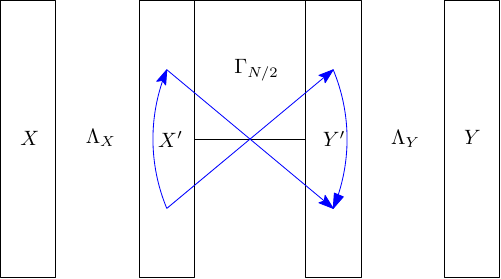}
\end{center}
\caption{Construction of superconcentrator $\Gamma_N$, figure adapted from \cite{Scho2006}.}
\label{fig:superconstr}
\end{figure}

This completes the description of graphs $\Gamma_N$. A schematic illustration is given in Fig.~\ref{fig:superconstr}. By construction, the number of edges $f(N)$ satisfies
$$f(N)  = (2d+2)N + f(N/2),$$
where $d$ is the average degree of $E_N$. Solving this recursion gives $$f(N)=4(d+1)N+const.$$

\begin{remark} 
Below we will work with piecewise linear functions. It will be convenient to specify them by a list of points:
the list $(x_1,y_1),\ldots,(x_k,y_k)$ with $x_1<\ldots<x_k$ specifies a continuous function $F:[x_1,x_k]\rightarrow\mathbb R$ which is linear on each interval $[x_i,x_{i+1}]$,
and satisfies $F(x_i)=y_i$ for all $i$.

\end{remark}

\begin{theorem}[\cite{AlonC03}]
Let $e(\alpha) \colon [0,1] \to [0,1]$ be a piecewise linear function (see also Fig. \ref{fig:ealplha}) connecting the points
$$(0,0), \qquad \left(\frac14,\frac12\right),\qquad \left(\frac12,\frac34\right), \qquad (1,1).$$
Suppose that $E_N$ is an $(N,\alpha,e(\alpha))$-expander for any $\alpha\in[0,1]$ and for any $N$.
Then $\Gamma_N$ is an $N$-superconcentrator.
\label{th:AlonC03}
\end{theorem}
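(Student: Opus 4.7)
The plan is to proceed by induction on $N$, with a base case at some small $N_0$ (where $\Gamma_{N_0}$ can be taken directly, e.g., as a complete bipartite graph) and an inductive step exploiting the recursion: $\Gamma_N$ is $\Gamma_{N/2}$ plus $\Lambda_X$, $\Lambda_Y$ and the four families of direct edges. Assuming $\Gamma_{N/2}$ is an $N/2$-superconcentrator, I take arbitrary $S\subset X$, $T\subset Y$ with $|S|=|T|=k$ and aim to construct $k$ node-disjoint paths from $S$ to $T$.

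My construction decomposes each path into three segments: a matching $M_X$ sending $S$ into some $S'\subset X'$ through $\Lambda_X$; $k$ node-disjoint routes from $S'$ to some $T'\subset Y'$ through the middle (direct edges together with $\Gamma_{N/2}$); and a matching $M_Y$ carrying $T'$ to $T$ through $\Lambda_Y$. Viewing $X'$ as partitioned into the $N/2$ pairs $(x'_i,x'_{i+N/2})$ (and similarly $Y'$), the middle can accommodate $k$ node-disjoint paths only for certain pair-patterns of $S'$ and $T'$: when a pair on the left is singly occupied, its path may enter $\Gamma_{N/2}$ (if necessary via the relay edge $x'_{i+N/2}\to x'_i$); when both vertices of a pair are in $S'$, the only feasible routing uses the intra-pair direct edges $x'_{i+N/2}\to y'_i$ and $x'_i\to y'_{i+N/2}$, at the cost of partially fixing the pair-pattern of $T'$. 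The problem thus reduces to a Hall-type existence statement for $M_X$ with a suitably structured image $S'$ (and symmetrically for $T'$), together with the inductive hypothesis for the $\le N/2$ paths that ultimately cross $\Gamma_{N/2}$.

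The case split is naturally by the three linear pieces of $e(\alpha)$. For $k\le N/4$ the slope-$2$ piece yields $|\Gamma(U)|\ge 2|U|$ for every $U\subset S$, so Hall's condition is satisfied on the pair-quotient graph (each pair collapsed to one vertex), and $S'$ may be chosen with at most one vertex per pair, routing everything through $\Gamma_{N/2}$. For $N/4<k\le N/2$ the slope-$1$ piece $e(\alpha)=\tfrac14+\alpha$ just compensates for the inevitability of some doubly-occupied pairs, which are resolved by the intra-pair direct edges. For $k>N/2$ the slope-$\tfrac12$ piece $e(\alpha)=\tfrac12+\tfrac\alpha2$ (with $e(1)=1$) gives $|\Gamma(S)|\ge N/2+k/2$, supporting the full use of $\Gamma_{N/2}$ plus $k-N/2$ direct-edge bypasses, and the value $e(1)=1$ guarantees surjectivity at the extreme $k=N$.

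I expect the main obstacle to be the middle regime $N/4<k\le N/2$, where naive pair-Hall already fails and one must simultaneously choose (i) which pairs of $S'$ are doubly occupied, (ii) the matching $M_X$ realising this $S'$, and (iii) a compatible symmetric choice for $T'$ on the right, ensuring that the $\le N/2$ paths entering $\Gamma_{N/2}$ remain node-disjoint at its inputs and outputs so that the inductive hypothesis can complete them, while the direct edges absorb the residual. The three breakpoints of $e$ are calibrated precisely so that this bookkeeping closes.
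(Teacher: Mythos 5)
Your high-level architecture (matchings through $\Lambda_X$ and $\Lambda_Y$, routing the middle by distinguishing singly- and doubly-occupied index pairs, direct-edge bypasses, induction on $N$ via $\Gamma_{N/2}$) is indeed the skeleton of the Alon--Capalbo argument, which the paper encapsulates as Lemma \ref{suffcond}: it suffices to exhibit matchings $M^*_S,M^*_T$ saturating $S$ and $T$ such that whenever $M^*_S$ covers both vertices of a pair $(x'_i,x'_{i+N/2})$, the opposite matching covers a vertex with a matching index, and symmetrically. But your proposal stops exactly where the real proof begins. The crux is the \emph{coordination} of the two matchings: a doubly-occupied left pair can only be discharged through a cross edge $x'_{i+N/2}\to y'_i$ or $x'_i\to y'_{i+N/2}$ if the corresponding right vertex is itself matched toward $T$, and whether that can be arranged depends on the joint structure of $S$ and $T$, not on $|S|$ alone. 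Your treatment of the regime $N/4<k\le N/2$ (and likewise $k>N/2$) consists of asserting that the slopes of $e$ ``compensate'' and that ``the bookkeeping closes''; no mechanism is given for making the two sides' pair-patterns compatible, and a naive Hall argument on each side separately genuinely fails here, as you yourself note. This is a missing idea, not a routine verification.

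The mechanism that closes the gap (used by \cite{AlonC03} and, for the generalized statement, in the paper's proof of Theorem \ref{mainthm}) has three components absent from your plan: (i) an \emph{overlap} lemma (cf.\ Lemma \ref{overlapmatch}) producing matchings $M^1_S,M^1_T$ that saturate sets $X'_I,Y'_I$ with a \emph{common} index set $I$ of size at least $o(\alpha)N$ --- proved not by Hall but by Menger's theorem on an auxiliary graph with edges $(x'_i,y'_i)$, where the min-cut is bounded below using the expansion at shifted arguments, $e(\alpha-a)+e(\alpha-d)-1$, and the concavity-type slope conditions on $e$; (ii) a Hall-type lemma (cf.\ Lemma \ref{hall}) in the graph obtained by \emph{merging} the pairs whose indices avoid $I$, where the neighborhood bound $\tfrac12\bigl(e(\alpha)+o(\alpha)\bigr)\ge\alpha$ is what the breakpoints of $e$ are actually calibrated for; and (iii) a combination step turning $M^1_S,M^1_T,M^2_S,M^2_T$ into the matchings required by Lemma \ref{suffcond}. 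Also note that the correct case analysis runs over all subset sizes $\gamma\le\alpha$ simultaneously (Hall must be checked for every $S_0\subseteq S$), so slicing the proof by which linear piece $|S|/N$ lands on, as you propose, does not match the structure of the argument. As it stands, your proposal is a plausible plan with its central step unproved.
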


\begin{figure}
\begin{center}
\includegraphics{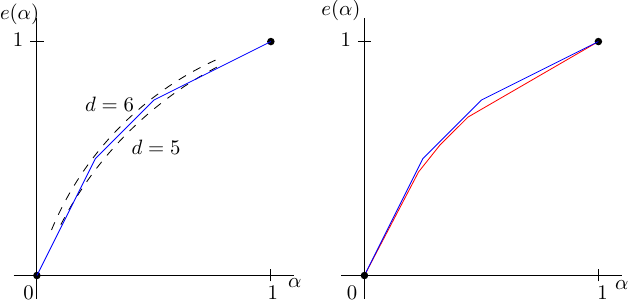}
\end{center}
\caption{Left: 
Expansion factor function $e(\alpha)$ requred by~\cite{AlonC03}. It is achieved by random bipartite graphs of average degree $d=6$, but not of degree $d=5$ (formula
for curves with $d=5$ and $d=6$ comes from \cite{Bas81} and is generalized in Section \ref{sec:expan} for fractional $d$).
Right: Comparison of $e(\alpha)$ from \cite{AlonC03} (blue) and $e(\alpha)$ we introduce (red).}
\label{fig:ealplha}
\end{figure}

As shown by \cite{Scho2006}, there exist graphs $E_N$ of degree $d=6$ that satisfy conditions of Theorem~\ref{th:AlonC03};
this yields superconcentrators of degree $4(6+1)+o(1)=28+o(1)$.

The vital part of verifying that $\Gamma_N$ is a superconcentrator boils down to constructing certain matchings (see Section \ref{sec:match}) from $\Lambda_X$ and $\Lambda_Y$ for given $S \subset X$ and $T \subset Y$ with $|S| = |T| = \alpha |N|$. This construction works in three regimes based on which subinterval of $[0,1]$ $\alpha$ falls into. Roughly speaking the three regimes correspond to how effectively can the overlaps of neighborhoods of $S$ and $T$ (when $X'$ and $Y'$ are identified) be used.

We require more from the first regime, namely also good pair-expansion. This can be used to construct some fraction of the sought matching cheaply. Even though this fraction decreases with $\alpha$, it allows to ``push down'' the curve of $e(\alpha)$ in the critical regions and thus we obtain a milder requirement on the degree of the random bipartite graph.

We also subdivide this interval corresponding to the first regime to two subintervals $[0,C_1]$ and $[C_1, C_3]$. This does not play a fundamental role, it only serves to obtain slightly better constants in the end.

Our alternative condition on $E_N$ is the following.

\begin{theorem}\label{mainthm} 
Let $C_1$, $C_2$, $C_3$, $C_4$, $C_5$, $C_6$ be real numbers from $(0,1)$ satisfying the following inequalities:
\begin{align}
C_1 < C_3 &< C_5 \label{numeq1}\\
C_2 + C_4 &\geq 1 \label{numeq2}\\
C_1+C_2+C_3 &\leq 1 \label{numeq3}\\
\frac{C_2}{C_1}> \frac{C_4-C_2}{C_3-C_1} > \frac{C_6-C_4}{C_5-C_3} = 1 &> \frac{1-C_6}{1-C_5} \label{numeq4}
\end{align}
Let $e(\alpha)$ be a piecewise linear function connecting the points
$$(0,0), \qquad (C_1, C_2), \qquad (C_3,C_4), \qquad (C_5,C_6), \qquad (1,1).$$
Suppose that for every $N$ graph $E_N$ is a bipartite graph
 with $N$ left vertices $\{x_1, \dots, x_N\}$ and $N$ right vertices $\{y_1, \dots, y_N\}$ and edges directed from left to right with the following properties:
\begin{enumerate}[(a)]
\item $E_N$ is a $(N, \alpha, e(\alpha))$-expander for every $\alpha\in[0,1]$.
\item $E_N$ is a $(N,\alpha ,1)$-pair-expander for every $\alpha \in [0,C_3]$ where the pairs are $(y_i, y_{i+N/2})$ for $i \in \{1,\dots, N/2\}$.
\end{enumerate}
Then $\Gamma_N$ is an $N$-superconcentrator.
\end{theorem}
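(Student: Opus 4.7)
The plan is to argue by induction on $N$, following the high-level case analysis used in the proof of Theorem~\ref{th:AlonC03} in~\cite{AlonC03}. Fix $S\subset X$ and $T\subset Y$ with $|S|=|T|=k$, let $\alpha=k/N$, and construct $k$ node-disjoint paths in $\Gamma_N$ by splitting on which of the four linear pieces of $e(\alpha)$ contains $\alpha$: the intervals $[0,C_1]$, $[C_1,C_3]$, $[C_3,C_5]$, and $[C_5,1]$. In each range one decides how many of the $k$ paths go through the inner superconcentrator $\Gamma_{N/2}$ between $X''$ and $Y''$ and how many are routed through the direct shortcut edges $x'_{i+N/2}\!\to\! y'_i$, $x'_i\!\to\! y'_{i+N/2}$, $x'_{i+N/2}\!\to\! x'_i$, and $y'_i\!\to\! y'_{i+N/2}$, and then uses condition~(a) together with Hall's theorem to produce the required systems of node-disjoint paths from $S$ into $X'$ and from $Y'$ into $T$.

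The genuinely new ingredient is how condition~(b) is exploited in the ranges $\alpha\in[0,C_1]$ and $\alpha\in[C_1,C_3]$, where condition~(a) alone provides strictly less expansion than the slope-$2$ segment used in~\cite{AlonC03}. Since $\gamma=1$ and pair-expansion holds at every smaller size as well, Hall's theorem applied to the pair-incidence bipartite graph yields a system of distinct representatives: each $s\in S$ is assigned a distinct pair $(x'_i,x'_{i+N/2})$ containing a neighbour of $s$ in $E_N$. We then route $s$ into the first-half vertex $x'_i\in X''$ directly when possible, and otherwise via the shortcut $x'_{i+N/2}\!\to\! x'_i$. A symmetric application of~(b), with $T\subset Y$ viewed as a subset of the left side of $E_N$ under $\Lambda_Y$, produces $k$ node-disjoint paths from $Y''$ to $T$ using the edges $y'_i\!\to\! y'_{i+N/2}$ as shortcuts. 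The inductive hypothesis on $\Gamma_{N/2}$ then supplies the missing $k$ paths between $X''$ and $Y''$, which is legal because $k\leq N/2$.

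The principal difficulty lies in the higher ranges $\alpha\in[C_3,C_5]$ and $\alpha\in[C_5,1]$, where pair-expansion is unavailable and, for the second range, $k>N/2$ forces some paths to bypass $\Gamma_{N/2}$ entirely. There $S$ must be split into a part routed through $\Gamma_{N/2}$ and a part routed directly to $Y'$ via shortcut edges only, and one has to show that condition~(a) delivers enough expansion in $X'$ for the second part to meet the preimage in $Y'$ of a symmetric split of $T$. The arithmetic constraints \eqref{numeq1}--\eqref{numeq4} enter precisely at this counting step: $C_1+C_2+C_3\leq 1$ bounds the number of $X'$-vertices jointly consumed so that the inductive call on $\Gamma_{N/2}$ is applicable, $C_2+C_4\geq 1$ forces sufficient overlap between the forward image of $S$ and the backward preimage of $T$, and the slope condition~\eqref{numeq4} ensures that $e$ is concave with middle slope exactly~$1$, which is needed so that the routing counts match consistently at each transition point $C_1,C_3,C_5$ between the cases.
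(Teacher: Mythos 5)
Your low-range argument (for $\alpha\le C_3$, use pair-expansion with $\gamma=1$ plus Hall's theorem to assign each source a distinct pair $(x'_i,x'_{i+N/2})$, enter $X''$ either directly or via the edge $(x'_{i+N/2},x'_i)$, and symmetrically on the $Y$ side) is sound and is essentially the same use of condition (b) as the first case of the paper's Hall-condition verification (Lemma \ref{hall}). The problem is that for $\alpha>C_3$ your proposal only states the goal, not an argument, and this is where all the real work lies. When some paths must use the direct edges $(x'_{i+N/2},y'_i)$ and $(x'_i,y'_{i+N/2})$, it is not enough that condition (a) makes $|X'_S|$ and $|Y'_T|$ large: these direct edges only join vertices with the \emph{same} pair index, so you need a lower bound on the number of common indices reachable from $S$ in $X'$ and co-reachable from $T$ in $Y'$. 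The paper obtains this via Menger's theorem on the auxiliary graph $\Gamma^1_N$ and the cut estimate \eqref{eq_overlap}, which is where the shape of $e$ (decreasing slopes, middle slope exactly $1$ from \eqref{numeq4}) is used to show the minimum is attained at $a=0$, $d=o(\alpha)$, yielding the overlap function $o(\alpha)$; nothing in your sketch plays this role, and a pure expansion count cannot replace it.

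A second missing ingredient is the compatibility between the $S$-side and $T$-side routings. Your plan to "split $S$ into a part routed through $\Gamma_{N/2}$ and a part routed directly" must be coordinated so that whenever both $x'_i$ and $x'_{i+N/2}$ are consumed on the $X$ side, the corresponding exit on the $Y$ side is actually available (this is exactly condition (b) of Lemma \ref{suffcond}); the paper achieves it by merging the pairs with indices outside the overlap set $I$, verifying Hall's condition in the merged graph $\Gamma^2_N$ (Lemma \ref{hall}), and then invoking the nontrivial combination lemma of \cite{AlonC03} that turns $M^1_S,M^1_T,M^2_S,M^2_T$ into the final matchings. Your proposal contains no analogue of this step. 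Relatedly, your account of where the numerical hypotheses enter is not accurate: \eqref{numeq2} and \eqref{numeq3} are used to verify the overlap bound and Hall's condition at the boundary value $\alpha=C_3$ (e.g.\ $\tfrac12(e(C_3)+o(C_3))\ge C_3$), not to "make the inductive call on $\Gamma_{N/2}$ applicable," and the slope conditions \eqref{numeq4} are needed for the minimization in \eqref{eq_overlap} and for the monotonicity of $e(\gamma)-\gamma$, not merely for consistency of counts at the breakpoints. As it stands, the proposal is a plausible outline for the easy regime but has a genuine gap precisely where the theorem's hypotheses are actually exploited.
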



Note that the (degenerate) choice of $C_1 = C_3 = \frac14$, $C_2 = C_4 = C_5 = \frac12$, and $C_6 = \frac34$ gives function $e(\alpha)$ from Theorem \ref{th:AlonC03}.

We are not able to give a direct combinatorial interpretation of conditions \eqref{numeq1}-\eqref{numeq4}, however one may spot that \eqref{numeq2} and \eqref{numeq3} enforce high enough expansion and \eqref{numeq4} witnesses for the concavity of $e(\alpha)$ which later translates into certain monotonicity of overlap sizes.

Following numerical experiments, we chose the values of constants $C_1$, \dots $C_6$ that effectively minimize the average degree $d$ within the bounds given by Theorem \ref{mainthm}.

\begin{theorem}\label{expexist} If $N$ is sufficiently large then there exists a bipartite graph $E_N$ of average degree $d=5.325$
that satisfies conditions of Theorem~\ref{mainthm} with constants
\begin{align*}
C_1 &= 0.2301, \quad C_3 = 0.3322, \quad C_2 = C_5 = 1-C_1-C_3, \\
C_4 &= 1-C_2, \quad C_6 = 1-C_3.
\end{align*}
\end{theorem}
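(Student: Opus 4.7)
The plan is to use the probabilistic method following Bassalygo. I would work with a random bipartite graph on $|L|=|R|=N$ vertices, where each left vertex is assigned a uniformly random neighborhood in $R$ independently of the others. To obtain the fractional average degree $d=5.325$, I would split $L$ into two classes in proportions roughly $0.675$ and $0.325$ and give them degrees $5$ and $6$ respectively. Both conditions~(a) and~(b) are then established by first-moment estimates of the same flavor, followed by a union bound over the $O(N)$ subset sizes.

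For condition~(a), fix $\alpha\in(0,1)$ and a set $S\subset L$ of size $k=\lfloor\alpha N\rfloor$, and write $S=S_5\sqcup S_6$ according to degree class. For a candidate target $T\subset R$ of size $m=\lceil e(\alpha)N\rceil-1$, the event $\Gamma(S)\subset T$ has probability $\bigl(\binom{m}{5}/\binom{N}{5}\bigr)^{|S_5|}\bigl(\binom{m}{6}/\binom{N}{6}\bigr)^{|S_6|}$. Multiplying by the $\binom{N}{k}\binom{N}{m}$ choices of $S$ and $T$, summing over the admissible splits of $k$ by degree class, and taking logarithms produces an entropy-type exponent $N\varphi(\alpha)+o(N)$. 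Because $e$ is piecewise linear and $\varphi$ is smooth on each piece, its maximum is attained at a breakpoint, so the condition $\varphi(\alpha)<0$ on $(0,1)$ reduces to a finite numerical verification at $\alpha\in\{C_1,C_3,C_5\}$, together with routine handling of $\alpha$ near $0$ and $1$.

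For condition~(b), I would repeat the template with the target set replaced by a union $P_1\cup\cdots\cup P_{k-1}$ of $k-1$ pairs, a set of $2(k-1)$ vertices in $R$. The count $\binom{N/2}{k-1}$ of pair-families enters the first-moment bound in place of $\binom{N}{m}$, while the same hypergeometric probability governs the containment event. This yields another exponent $\psi(\alpha)$ which must be shown strictly negative on $[0,C_3]$; again the worst case is at the endpoint $\alpha=C_3$, with the small-$\alpha$ regime handled separately using that any degree-$5$ vertex is adjacent to at least one pair. A union bound over all relevant integer sizes $k$ for both (a) and (b) preserves the exponential decay, so for $N$ large a random $E_N$ satisfies both properties with probability tending to $1$ and an $E_N$ of the required type exists.

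The main obstacle is the fractional degree. Every hypergeometric estimate carries an additional combinatorial factor coming from the split $S=S_5\sqcup S_6$ that must be absorbed uniformly in the exponent, and the precise split proportions have to be tuned against the $C_i$ so that both $\varphi$ and $\psi$ are strictly negative at the relevant breakpoints with $d=5.325$, rather than the integer $d=6$ used in previous works. Once the fractional bookkeeping is set up and the resulting inequalities are verified numerically with the specified $C_1,\ldots,C_6$, the remaining estimates reduce to a routine application of Bassalygo's framework.
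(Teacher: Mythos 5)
There is a genuine gap, and it lies in your choice of random model. The paper's graph $G(N,d,\delta)$ is a union of $d=5$ \emph{random permutation graphs} plus a partial matching $(l_i,r_i)$, $i\le\lfloor\delta N\rfloor$, with $\delta=0.325$. You instead give each left vertex an independent uniformly random neighborhood (degree $5$ or $6$). These models are not interchangeable for a first-moment argument: in the permutation model the probability that a set $S$ of size $k=\alpha N$ lands inside a target $T$ of size $m=\beta N$ is $\bigl(\binom{m}{k}/\binom{N}{k}\bigr)^{d}\approx\exp\bigl(-dN\,[\,H(\alpha)-\beta H(\alpha/\beta)\,]\bigr)$, whereas in your model it is $\approx\exp\bigl(-N(5|S_5|/k+6|S_6|/k)\,\alpha\log(1/\beta)\,\bigr)$ per unit degree only $\alpha\log(1/\beta)$, which is strictly smaller than $H(\alpha)-\beta H(\alpha/\beta)$ for $\alpha<\beta$. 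With the paper's (deliberately tight) constants this loss is fatal: at the breakpoint $\alpha=C_1=0.2301$, $\beta=C_2=0.4377$ one has $H(\alpha)+H(\beta)\approx 1.225$, while even if \emph{every} vertex had degree $6$ your exponent is $6\,\alpha\log(1/\beta)\approx 1.141$, so the union bound $\binom{N}{k}\binom{N}{m}\beta^{6k}$ diverges exponentially. Thus the ``routine numerical verification'' you defer to would fail at exactly the points you propose to check; your model cannot even reproduce the density-$28$ result of Sch\"oning, let alone $d=5.325$.

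To repair this you must keep Bassalygo's permutation-overlay structure (this is what makes the hypergeometric ratio $\binom{m}{k}/\binom{N}{k}$, rather than $(m/N)^{k}$ per unit degree, appear in the exponent) and realize the fractional part not as an independent degree class but as the deterministic partial matching; the price is the extra hypergeometric sum over the intersection sizes $i=|U\cap\{l_1,\dots,l_{\lfloor\delta N\rfloor}\}|$, which is exactly what the paper's Lemmas \ref{pairprob} and \ref{expansionprob} control and what forces the auxiliary inequalities involving $c_\alpha$ and $p_\alpha$ (verified by computer in Appendix~B). Your treatment of the pair-expansion target as $k-1$ pairs and your remark that $\alpha$ near $0$ and $1$ need separate handling are fine in outline, but those separate regimes also rely on the permutation-model bounds (the conditions $d>2+e'(0)$, $d>1+2/e'(1)$, $d>1+\gamma$), so they do not survive the change of model either.
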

Taken together, these theorems imply our main result, i.e.\ the existence of superconcentrators of density
$4(5.325+1)+o(1)=25.3+o(1)$.

\begin{figure}[t]
\begin{center}
\includegraphics[scale=1]{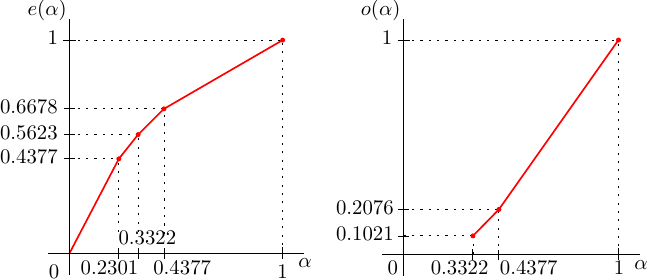} 
\end{center}
\caption{Left: function $e(\alpha)$ for the constants in Theorem~\ref{expexist}.
Right: function $o(\alpha)$ for these constants (it is used in the proof of Theorem~\ref{mainthm}).}
\label{fig:graphs}
\end{figure}

\begin{remark}
Note that an alternative construction was given in \cite{YuanK12}.
They modify the construction above by slightly shrinking the
size of sets $X'$, $Y'$, $X''$, $Y''$ while maintaining  $|X''|=|Y''|=\frac{1}{2}|X'|=\frac{1}{2}|Y'|$.
They also add extra edges from $X$ to $Y$ of the form $(x_i,y_i)$
for a small fraction of indices $i\in\{1,\ldots,N\}$. As a result, they obtain
superconcentrators of density $27.4136+o(1)$.

The analysis in \cite{YuanK12} uses only the ordinary expansion property, as in \cite{AlonC03}.
We conjecture that the pair-expansion property could also improve the density of the scheme in \cite{YuanK12},
but haven't explored the constants for such approach.
\end{remark}


\section{Proof of Theorem \ref{mainthm}} \label{sec:match}

Let us fix some $S \subset X$ and $T \subset Y$ such that $|S| = |T|$.

The following sufficient condition for $\Gamma_N$ to be a $N$-super\-con\-cent\-rator was established (and is easy to prove) in \cite{AlonC03}.

\begin{definition} We say that a matching $\Lambda$ between sets of vertices $A$ and $B$ \emph{saturates} some $A' \subset A$ if each vertex $x \in A'$ appears in some edge of $\Lambda$.
\end{definition}

\begin{lemma}\label{suffcond} 
Suppose that for any $S\subseteq X$ and $T\subseteq Y$ with $|S| =|T|$
there exist matchings $M^*_S \subset \Lambda_X$ and $M^*_T \subset \Lambda_Y$ such that both $M^*_S$ and $M^*_T$ have $|S| = |T|$ edges, and $M^*_S$ and $M^*_T$ satisfy the conditions stated below.

\begin{enumerate}[(a)]
\item $M^*_S$ saturates $S$ and $M^*_T$ saturates $T$.
\item Let $i \in \{1,\dots, N/2 \}$. Then if $M^*_S$ covers both $x'_i$ and $x'_{i+N/2}$, then $M^*_T$ covers at least one vertex of $\{y_i, y_{i+N/2} \}$. Similarly, if $M^*_T$ covers both $y'_i$ and $y'_{i+N/2}$, then $M^*_S$ covers at least one vertex of $\{x_i, x_{i+N/2} \}$.
\end{enumerate}

Then $\Gamma_N$ is a $N$-superconcentrator.
\end{lemma}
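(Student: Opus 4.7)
The plan is to proceed by induction on $N$, using the recursive construction of $\Gamma_N$ around an inner copy of $\Gamma_{N/2}$. The base case handles the smallest $N$ at which the recursion terminates, where $\Gamma_N$ can be checked to be a superconcentrator by direct inspection.

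For the inductive step, I would fix arbitrary $S \subseteq X$ and $T \subseteq Y$ with $|S| = |T| = k$ and apply the hypothesis to obtain matchings $M^*_S \subseteq \Lambda_X$ and $M^*_T \subseteq \Lambda_Y$ satisfying (a) and (b). Let $S' \subseteq X'$ be the $X'$-endpoints of $M^*_S$ and $T' \subseteq Y'$ the $Y'$-endpoints of $M^*_T$; both have size $k$. Since the matchings contribute $k$ vertex-disjoint edges from $S$ to $S'$ and from $T'$ to $T$, it suffices to construct $k$ node-disjoint paths from $S'$ to $T'$ inside the middle subgraph $H$, which consists of $X' \cup Y'$, the four local edges at each block $i$, and the inner copy of $\Gamma_{N/2}$ between $X''$ and $Y''$.

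The next step is to process each block $i \in \{1,\ldots,N/2\}$ independently. For each block, classify the configuration $(s_i, t_i)$ with $s_i = |S' \cap \{x'_i, x'_{i+N/2}\}|$ and $t_i = |T' \cap \{y'_i, y'_{i+N/2}\}|$, observing that condition (b) rules out the worst configurations (those which would demand two local exits into, or two exits out of, the inner graph at a single block). For each remaining configuration I would exhibit a concrete local routing in which some $S'$-vertices are paired to $T'$-vertices using only the four local edges $(x'_{i+N/2}, y'_i)$, $(x'_{i+N/2}, x'_i)$, $(x'_i, y'_{i+N/2})$, $(y'_i, y'_{i+N/2})$, while any remaining $S'$-vertex is routed into $\Gamma_{N/2}$ through the unique block-vertex $x'_i \in X''$, and symmetrically on the output side through $y'_i \in Y''$. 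This produces subsets $S'' \subseteq X''$ and $T'' \subseteq Y''$ of equal size representing the residual demand. The induction hypothesis applied to $\Gamma_{N/2}$ then supplies $|S''|$ node-disjoint paths from $S''$ to $T''$; concatenating these with the local routings and the two outer matchings yields the desired $k$ node-disjoint paths.

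The main obstacle is the finite-but-delicate case analysis on the block configurations. One must verify in each allowed case that (i) the chosen local paths are internally node-disjoint and respect the prescribed edge directions, (ii) the block consumes at most one vertex of $X''$ and at most one of $Y''$ so that different blocks never collide on the boundary of the inner $\Gamma_{N/2}$, and (iii) precisely the same number of $X''$- and $Y''$-vertices are contributed to $S''$ and $T''$ so the recursive call is well-posed. Condition (b) is exactly the hypothesis needed to preclude the sole configurations in which no such local routing exists, so once the tabulation is completed the induction closes.
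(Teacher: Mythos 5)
The paper never proves this lemma---it is quoted from Alon and Capalbo---so there is nothing in the text to compare with line by line; your argument is the standard one for this recursive construction, and its outline is sound. Sending $S$ through $M^*_S$ to $S'\subseteq X'$ and $T'$ through $M^*_T$ to $T$, reducing to $k$ node-disjoint paths from $S'$ to $T'$ in the middle layer, and doing a per-block analysis does close: writing $a=x'_i$, $A=x'_{i+N/2}$, $b=y'_i$, $B=y'_{i+N/2}$, every configuration $(s_i,t_i)$ other than $(2,0)$ and $(0,2)$ admits a local routing that touches the inner graph in at most the one input $a$ and the one output $b$ (e.g.\ for starts $\{a,A\}$ and end $B$ take $A\to b\to B$ and send $a$ into $\Gamma_{N/2}$; for start $A$ and ends $\{b,B\}$ take $A\to a\to B$ and let $b$ receive from $\Gamma_{N/2}$), and since each local route pairs one start with one end, $|S''|=|T''|$ holds automatically, so the recursive call is well-posed.

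Two caveats you should make explicit. First, condition (b) as printed refers to $\{y_i,y_{i+N/2}\}\subset Y$ and $\{x_i,x_{i+N/2}\}\subset X$; for it to do the job you (correctly) assign to it---excluding exactly the blocks that would force two entries into, or two exits out of, the inner graph---it must be read with primes, i.e.\ about $y'_i,y'_{i+N/2}$ and $x'_i,x'_{i+N/2}$. This is clearly the intended statement (it is what the matchings built from Lemmas \ref{overlapmatch} and \ref{hall} actually guarantee), but your proof silently relies on that reading. Second, your induction invokes that $\Gamma_{N/2}$ is a superconcentrator, whereas the hypothesis of the lemma as stated concerns only the outermost layer $\Lambda_X,\Lambda_Y$; strictly, one must assume the matching hypothesis at every level of the recursion (as in the paper's application, where the conditions on $E_M$ hold for every $M$), and this should be said when the induction is set up.
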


With Lemma \ref{suffcond} in mind, we devote the rest of this section to proving the following proposition.

\begin{prop} For any $S\subseteq X$ and $T\subseteq Y$ with $|S| =|T|$
there exist matchings $M^*_S$ and $M^*_T$ satisfying the conditions specified in Lemma \ref{suffcond}.
\end{prop}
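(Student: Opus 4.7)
By symmetry (edge reversal in $E_N$ swaps the roles of $\Lambda_X$ and $\Lambda_Y$ while preserving both expansion properties), the construction of $M^*_T$ reduces to the same problem as $M^*_S$ with $S$ and $T$ interchanged, so I focus on $M^*_S$. Write $k = |S| = |T|$ and $\alpha = k/N$. Because $M^*_T$ saturates $T$ with exactly $|T|$ edges, its $Y$-footprint is precisely $T$, and ``$M^*_T$ covers at least one of $\{y_i, y_{i+N/2}\}$'' is just $\{y_i, y_{i+N/2}\} \cap T \ne \emptyset$. Thus condition (b) for $M^*_S$ is fully determined by $T$: setting $A = \{i \in \{1,\dots,N/2\} : \{y_i, y_{i+N/2}\} \cap T = \emptyset\}$, $M^*_S$ must use at most one vertex of $\{x'_i, x'_{i+N/2}\}$ for every $i \in A$. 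Since each $i \in A$ forces both $y_i$ and $y_{i+N/2}$ to lie outside $T$, we have $2|A| \le |Y \setminus T| = N-k$, giving $|A| \le (N-k)/2$.

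I would recast the construction of $M^*_S$ as a single Hall-type matching problem in a pair-contracted auxiliary bipartite graph $H$: the left side is $X$, and the right side consists of one ``merged slot'' $\bar v_i$ for each $i \in A$ together with two ``individual slots'' representing $x'_i$ and $x'_{i+N/2}$ for each $i \notin A$; an edge $(x,x'_j) \in \Lambda_X$ lifts to an edge from $x$ to the slot containing $x'_j$. A valid $M^*_S$ exists iff $H$ admits a matching saturating $S$, equivalently iff $|N_H(U)| \ge |U|$ for every $U \subseteq S$. Two bookkeeping identities are key: (i) $|N_H(U)| = |\Gamma(U)| - \mathrm{fp}_A(U)$ where $\mathrm{fp}_A(U)$ is the number of $A$-pairs with both vertices in $\Gamma(U)$, and (ii) $|N_H(U)| \ge p(U)$, the total number of right-hand pairs adjacent to $U$. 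Combining these with $\mathrm{fp}_A(U) \le \min(|A|, |\Gamma(U)|/2)$ yields
\[
|N_H(U)| \;\ge\; \max\bigl(\,p(U),\; |\Gamma(U)| - |A|,\; |\Gamma(U)|/2\,\bigr).
\]

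I would verify Hall's condition by splitting on $\alpha_U := |U|/N$. For $\alpha_U \le C_3$ the pair-expander property (b) of Theorem~\ref{mainthm} gives $p(U) \ge |U|$, closing this regime directly. For $\alpha_U > C_3$ only the ordinary expansion is available, yielding $|\Gamma(U)| \ge e(\alpha_U) N$; combined with $|A| \le (1-\alpha) N/2$, the two remaining bounds reduce to the inequalities $e(\alpha_U) \ge \alpha_U + (1-\alpha)/2$ and $e(\alpha_U) \ge 2\alpha_U$, at least one of which must hold. A direct check against the four linear pieces of $e$---in particular the plateau $e(\alpha) - \alpha = C_4 - C_3 = C_1$ on $[C_3, C_5]$ guaranteed by the slope chain (\ref{numeq4}), together with $C_2 + C_4 \ge 1$ from (\ref{numeq2})---clears the bulk of this range.

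The main obstacle I foresee is the regime where $\alpha$ lies just above $C_5$: there $(1-\alpha)/2$ can exceed $C_1$, breaking the $|\Gamma(U)| - |A|$ bound on the plateau $\alpha_U \in [C_3, C_5]$, while $e(\alpha_U) < 2\alpha_U$ simultaneously breaks the $|\Gamma(U)|/2$ bound. Closing this case is presumably what the function $o(\alpha)$ of Figure~\ref{fig:graphs} (right) captures. I would refine the estimate on $\mathrm{fp}_A(U)$ by exploiting the bounds $|A| \le (N-k)/2$ and $|\Gamma(U)| \le N$ jointly rather than as independent worst cases, arguing that an extremal $T$ (a union of full pairs) and an extremal $U$ cannot violate Hall simultaneously once (\ref{numeq1})--(\ref{numeq4}) are in force; an alternative would be an augmenting-path swap that reroutes a violating edge through a smaller subset $U' \subsetneq U$ with $|U'| \le C_3 N$, whose pair-expansion always supplies an unused slot. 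Showing that (\ref{numeq1})--(\ref{numeq4}) are exactly the constants that make this fine-tuned combination go through, and then applying the symmetric argument to $M^*_T$ via $B = \{i : \{x_i, x_{i+N/2}\} \cap S = \emptyset\}$, is the heart of the proof.
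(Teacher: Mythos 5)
There is a genuine gap, and it starts at the very first step. Your decoupling rests on reading condition (b) of Lemma \ref{suffcond} as a condition on $T$ alone (``$\{y_i,y_{i+N/2}\}\cap T\ne\emptyset$''). But the condition the routing argument of \cite{AlonC03} actually needs --- and the condition the paper's proof establishes --- is about the \emph{primed} vertices: if $M^*_S$ covers both $x'_i$ and $x'_{i+N/2}$, then $M^*_T$ must cover one of $y'_i,y'_{i+N/2}$, i.e.\ one of the two cross edges $(x'_{i+N/2},y'_i)$, $(x'_i,y'_{i+N/2})$ must lead to a vertex that $M^*_T$ matches into $T$ (the unprimed statement in the lemma is best read as a typo; with unprimed $y_i$ a doubly occupied $X'$-pair has no guaranteed continuation, since $y'_i$ need not be adjacent to any free terminal). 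This version couples $M^*_S$ and $M^*_T$ through their images in $X'$ and $Y'$, so they cannot be constructed independently ``by symmetry''; this coupling is precisely why the paper first builds, via Menger's theorem on the auxiliary graph with the identity edges $(x'_i,y'_i)$, a common index set $I$ with $|I|\ge o(\alpha)N$ that is saturated on both sides (Lemma \ref{overlapmatch}), merges only the pairs disjoint from $I$, and then invokes the combination step of \cite{AlonC03} to splice $M^1_S,M^1_T,M^2_S,M^2_T$ into $M^*_S,M^*_T$. If the condition really depended only on $T$, none of that machinery would be needed.

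Second, even granting your reading, the proof is incomplete exactly where you flag it --- and the hole is wider than ``just above $C_5$''. For $U=S$ with $\alpha=\alpha_U\in[C_3,C_5]$ one has $e(\alpha)-\alpha=C_4-C_3=C_1=0.2301$, while $(1-\alpha)/2>C_1$ whenever $\alpha<1-2C_1\approx 0.54$, so the bound $|\Gamma(U)|-|A|\ge|U|$ fails on the whole plateau; and $e(\alpha)\ge 2\alpha$ already fails at $\alpha=C_3$ ($0.5623<0.6644$). So both of your Hall estimates break on all of $[C_3,C_5]$ and beyond, and the suggested repairs (an extremal-configuration argument, or an augmenting-path swap through a subset of size at most $C_3N$) are not carried out and cannot be extracted from hypotheses (a)--(b) alone: nothing prevents $\Gamma(S)$ from containing many full $A$-pairs. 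The missing idea is the paper's overlap set $I$: since pairs meeting $I$ are not merged, the contracted neighborhood of $S$ has relative size at least $o(\alpha)+\tfrac12\bigl(e(\alpha)-o(\alpha)\bigr)=\tfrac12\bigl(e(\alpha)+o(\alpha)\bigr)\ge\alpha$, verified via (\ref{numeq2})--(\ref{numeq4}), with the separate case $C_3\le\gamma<\alpha$ handled by counting the vertices of $S_0$ that $M^1_S$ matches into $X'_I$ (Lemma \ref{hall}). Without that coupled construction your approach does not close.
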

\begin{proof}

Let us denote by $X'_S$ the neighborhood of $S$ in $X'$ and similarly $Y'_T$ the neighborhood of $T$ in $Y'$.

As in \cite{AlonC03}, we will construct the desired matchings from two auxiliary pairs of matchings. The first one exploits the overlap in indices between $X'_S$ and $Y'_T$.

Define function $o(\alpha)\colon [C_3,1] \to [0,1] $ (which will control the size of the overlaps) as a piecewise linear function connecting the points
$$(C_3,C_3-C_1), \qquad (C_5, C_5-C_1), \qquad (1,1).$$


\begin{lemma}\label{overlapmatch} Let $S$ and $T$ be as above. Then there exist matchings $M^1_S \subset \Lambda_X$ and $M^1_T \subset \Lambda_Y$, and a subset $I$ of $\{1,\dots, N\}$ that satisfy the following conditions.

\begin{enumerate}[(a)]
\item Each edge in $M^1_S$ is incident to a vertex in $S$ and each edge in $M^1_T$ is incident to a vertex in $T$.
\item Let $X'_I$ denote the subset of $X'$ of the form $\{x'_i | i \in I\}$, and similarly let $Y'_I = \{y'_i | i \in I\}$. Then $M^1_S$ saturates $X'_I$ and $M^1_T$ saturates $Y'_I$.

\item Let $\alpha = |S|/N = |T|/N$. If $\alpha \geq C_3$, then $|I| \geq o(\alpha)N$.

\end{enumerate}
\end{lemma}
\begin{proof} It suffices to prove the lemma only in the case $\alpha\ge C_3$.
(When $\alpha<C_3$, we can take $I=\emptyset$, then we only need to verify property (a).
Matchings satisfying this property can be obtained, for example, by applying the lemma to subsets $S'=X$, $T'=Y$ and taking the matchings induced by $S,T$.)

Replace the edges between $X'$ and $Y'$ by the edges $$\{(x'_i, y'_i), i \in \{1,\dots, N\}\}$$ and call the resulting graph $\Gamma^1_N$. Applying Menger's Theorem for $\Gamma^1_N$ gives:

\begin{center} The maximum possible number of vertex-disjoint paths from $S$ to $T$ is equal to the minimum possible cardinality of a set of vertices $C$ that separates $S$ and $T$ in $\Gamma^1_N$.
\end{center}

Note that the maximum possible number of vertex-disjoint paths from $S$ to $T$ equals the maximum size of the set $I$.
Now consider the minimum vertex cut $C$ and let $|C \cap S| = aN$, $|C \cap X'_S| = bN$, $|C \cap Y'_T| = cN$, and $|C \cap T| = dN$ for some $a$, $b$, $c$, $d$. It suffices to prove $a+b+c+d \geq o(\alpha)$.

If $a+d > o(\alpha)$, we are done. Otherwise, assume $a+d \leq o(\alpha)$ and by computing the sizes of the neighbourhoods of $S\setminus C$ and $T \setminus C$ we find that
$$b + c \geq e(\alpha-a) + e(\alpha - d) - 1$$
or otherwise some vertex in $S\setminus C$ could be connected to a vertex in $T \setminus C$.
From there we have
\begin{equation}\label{eq_overlap}
a+b+c+d \geq e(\alpha-a)+a + e(\alpha - d)+d - 1 \qquad \text{for} \quad a+d \leq o(\alpha).
\end{equation}

The condition (\ref{numeq4}) implies the slope of $e(\alpha)$ decreases at points $C_1$, $C_3$, and $C_5$ and that this slope is equal to one on $[C_3,C_5]$. From here it follows that for $\alpha > C_3$, the right-hand side of (\ref{eq_overlap}) attains its minimal value for $a=0$, $d=o(\alpha)$.
Therefore
$$a+b+c+d \geq e(\alpha) + e(\alpha - o(\alpha))+o(\alpha)-1.$$

Now we distinguish two cases.

\begin{itemize}
\item $C_3 \leq \alpha \leq C_5$: For these values of $\alpha$ we have $\alpha - o(\alpha) = C_1$ and since $e(\alpha)$ is increasing, the inequality
$$e(\alpha) + e(C_1)+o(\alpha)-1 \geq o(\alpha)$$
only needs to be verified for $\alpha = C_3$, where it reduces to (\ref{numeq2}).

\item $C_5 \leq \alpha \leq 1$: This time $\alpha - o(\alpha) \in [0,C_1]$ and the inequality
$$
e(\alpha) + e(\alpha-o(\alpha)) +o(\alpha)-1 \geq o(\alpha)
$$
is linear in $\alpha$. Verifying for $\alpha = 1$ is immediate and for $\alpha =C_5$ it was already handled in the first distinguished case.
\end{itemize}
\end{proof}

The second pair of matchings takes place in $\Gamma_N$ after merging some pairs of vertices so that the ``bad case'' from Lemma \ref{suffcond}(b) is avoided.

Let us merge the pairs of vertices $(x'_i, x'_{i+N/2})$ and $(y'_i, y'_{i+N/2})$ for those $i$ for which $i \notin I$ and $i+N/2 \notin I$, where the set of indices $I$ comes from Lemma \ref{overlapmatch}. Let the resulting graph be $\Gamma^2_N$.

\begin{lemma}\label{hall} There exist matchings $M^2_S \subset \Lambda_X$ and $M^2_T \subset \Lambda_Y$ that saturate $S$ and $T$ respectively, satisfy $|M^2_S| = |M^2_T| = |S| = |T|$, and induce a matching also in the graph $\Gamma^2_N$.
\end{lemma}
\begin{proof} We will only show how to construct $M^2_S$; the construction of $M^2_T$ is completely analogous. It suffices to verify the Hall's condition in corresponding part of the graph $\Gamma^2_N$. Let $S_0 \subset S$ and let $|S| = \alpha N$, $|S_0| = \gamma N$.

We distinguish three cases:

\begin{itemize}

\item $\gamma \leq C_3$: Such subsets $S_0$ satisfy the Hall's condition due to (b) in Theorem \ref{mainthm}.

\item $C_3 \leq \gamma = \alpha$: The relative size of the neighborhood of $S_0$ is at least
$$o(\alpha) + \frac12 (e(\alpha) - o(\alpha)) = \frac12(e(\alpha)+o(\alpha)).$$
For showing this is at least $\alpha$ on $[C_3,1]$, it suffices (due to linearity) to verify it for $\alpha = C_3$, $\alpha=C_5$, $\alpha = 1$. The first case follows from (\ref{numeq2}) and (\ref{numeq3}), the second is due to $C_4-C_3 = C_6 - C_5$ from (\ref{numeq4}) the same as the first one and finally, the last one is immediate.

\item $C_3 \leq \gamma < \alpha$: Using the matching from Lemma \ref{overlapmatch} there are at least $(o(\alpha)+\gamma-\alpha)N$ vertices of $S_0$ matched to a vertex in the set of overlaps $X'_I$. Therefore, the relative size of the neighborhood is at least
$$o(\alpha)+\gamma-\alpha + \frac12 (e(\gamma) - (o(\alpha)+\gamma-\alpha)),$$
therefore it suffices (after simple manipulation) to prove
$$e(\gamma)-\gamma \geq \alpha - o(\alpha).$$
Since for the currently considered $\alpha$ and $\gamma$, we have $e(\gamma) - \gamma \geq e(\alpha)-\alpha$ (again due to decreasing slopes from (\ref{numeq4})), the previously established $\frac12\left(e(\alpha)+o(\alpha)\right) \geq \alpha$ gives the conclusion.
\end{itemize}
\end{proof}

It is shown in \cite{AlonC03} that from the matchings $M^1_S$, $M^1_T$, $M^2_S$, $M^2_T$ one can construct matchings $M^*_S$ and $M^*_T$ that satisfy both Lemma \ref{hall} and the conditions (a) and (b) of Lemma \ref{overlapmatch}. These matchings are easily seen to satisfy the conditions of Lemma \ref{suffcond} and this concludes our proof.

\end{proof}

\section{Expanders and Pair-expanders with Fractional Degree} \label{sec:expan}

In order to prove Theorem \ref{expexist} we use a probabilistic argument strongly following the ideas from \cite{Bas81}. The optimization carried out in the previous sections does not guarantee the existence of suitable expanders with degree $5$ which would improve the degree 6 used in \cite{Scho2006}. Therefore we introduce expanders with fractional degree and develop the criteria for their existence.

For this entire section, let $H(x) = -x\log x - (1-x)\log(1-x)$ be the binary
entropy function with $H(0) = H(1) = 0$. We use this function for asymptotic estimates of binomial coefficients.

Finally, let us from now on use the convention that ${n \choose k} = 0$ for $k< 0$ and $k>n$.

\begin{lemma}\label{stirling} (a) There exists $n_0\in\mathbb N$ such that for any integers $k,n$ with $0\le k\le n$ and $n\ge n_0$ it holds that
\begin{equation}
\left|\frac{1}{n} \log{n \choose k} - H\left( \frac{k}{n} \right)\right| < 2\cdot\frac{\log n}{n}
\end{equation}
(b) For any $\epsilon_1,\epsilon_2$ with $0<\epsilon_1<\epsilon_2<1$ there exists $n_0\in\mathbb N$ such that for any $\alpha\in[\epsilon_1,\epsilon_2]$ and any integer $n\ge n_0$ we have
\begin{equation}
\left|\frac{1}{n} \log{n \choose \lfloor\alpha n\rfloor} - H\left( \alpha \right)\right| < 3\cdot\frac{\log n}{n}
\end{equation}
\end{lemma}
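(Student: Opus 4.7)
The plan is to reduce the lemma to careful bookkeeping with Stirling's formula. I would start from the standard estimate
$$\log n! = n\log n - n + \tfrac{1}{2}\log(2\pi n) + O(1/n),$$
and apply it to each factorial in $\binom{n}{k}=n!/(k!(n-k)!)$. For indices in the bulk range $2\le k\le n-2$, writing $p=k/n$ and using the identity $-k\log k-(n-k)\log(n-k)=-n\log n+nH(p)$, this yields
$$\log\binom{n}{k}=nH(p)+\tfrac{1}{2}\log\frac{n}{2\pi k(n-k)}+O(1/n).$$
To finish part (a) on this range I would check that, after dividing by $n$, the middle term is bounded by $\log n/n$ (plus lower order) uniformly in $k$: the quantity $n/(k(n-k))$ is bounded above by an absolute constant whenever $k,n-k\ge 2$, and bounded below by $4/n$ (attained near $k=n/2$), so its logarithm lies in $[-\log n-C,\,C]$ for some absolute constant $C$. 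Hence the total error stays below $2\log n/n$ once $n$ is large enough.

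For the boundary indices $k\in\{0,1,n-1,n\}$ I would verify part (a) by hand: at $k=0$ or $k=n$ both sides vanish; at $k=1$ or $k=n-1$ the left side equals $\frac{\log n}{n}$ while $H(1/n)=\frac{\log n}{n}+O(1/n)$, so the difference is $O(1/n)<2\log n/n$ for $n$ large. This completes part (a).

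For part (b), set $k=\lfloor\alpha n\rfloor$, so that $|\alpha-k/n|\le 1/n$ and, once $n$ is large, $k/n$ lies in an interval $[\epsilon_1/2,(1+\epsilon_2)/2]$ bounded away from $\{0,1\}$. On this sub-interval the derivative $H'(x)=\log\frac{1-x}{x}$ is uniformly bounded by some constant $M=M(\epsilon_1,\epsilon_2)$, so the mean value theorem gives $|H(\alpha)-H(k/n)|\le M/n$. Combining with part (a) via the triangle inequality,
$$\Bigl|\tfrac{1}{n}\log\tbinom{n}{k}-H(\alpha)\Bigr|\le \tfrac{2\log n}{n}+\tfrac{M}{n}<\tfrac{3\log n}{n}$$
as soon as $n$ is large enough that $M\le\log n$.

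The only point that needs real care is the uniformity in $k$ demanded by part (a); Stirling behaves cleanly on the bulk, but the indices $k\in\{0,1,n-1,n\}$ have to be inspected separately because the $O(1/n)$ remainder in Stirling degrades when $k$ or $n-k$ is tiny. Beyond this, the argument is routine and no step presents a genuine obstacle.
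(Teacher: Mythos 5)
Your proof is correct and follows essentially the same route as the paper: part (a) via Stirling's formula (the paper uses two-sided Stirling bounds valid for all $m\ge 1$, so it only treats $k=0,n$ separately, whereas you treat $k\in\{0,1,n-1,n\}$ by hand), and part (b) deduced from (a) by the triangle inequality together with the bounded derivative of $H$ on a compact subinterval of $(0,1)$, exactly as in the paper. One cosmetic remark: in your displayed expansion of $\log\binom{n}{k}$ the remainder is only $O(1/k+1/(n-k))=O(1)$ uniformly over $2\le k\le n-2$, not $O(1/n)$, but since it gets divided by $n$ this does not affect the claimed bound $2\log n/n$.
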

\begin{proof} 
\noindent {\bf Part (a)~~}
For $k=0$ and $k=n$ the existence of such $n_0$ can be checked directly; we thus assume that $0<k<n$.
We will use the Stirling estimates for factorials of positive integers $m>0$:
\begin{align*}
\sqrt{2\pi}\ m^{m+1/2}e^{-m} \le m! &\le e\ m^{m+1/2}e^{-m} 
\end{align*}
This implies that
\begin{equation*}
\log  m! - m \log m = -m\log e + \frac{1}{2}\log m + C_m, \quad C_m \in [const_1,const_2].
\end{equation*}
Combining these relations for $m=n$, $m=k$ and $m=n-k$ (the last two with the ``minus'' sign) and dividing by $n$ gives
\begin{equation*}
\frac{1}{n}\log  \binom{n}{k} -  H\left(\frac{k}{n}\right)  = \frac{1}{2n}\left[ \log n - \log k - \log (n-k)\right] + \frac{C_{nk}}{n},
\end{equation*}
where $C_{nk} \in [const'_1,const'_2]$.
This implies part (a) of the lemma.

\noindent {\bf Part (b)~~} Fix $\epsilon'_1\in(0,\epsilon_1)$.
Since function $H(\cdot)$ has a bounded derivative on $[\epsilon'_1,\epsilon_2]$,
we have $|H(\alpha)-H(\alpha')|\le const\cdot |\alpha-\alpha'|$ for any $\alpha,\alpha'\in[\epsilon'_1,\epsilon_2]$
(where the constant depends on $\epsilon'_1,\epsilon_2$). We will take $\alpha'=\lfloor\alpha n\rfloor/n$ (which belongs to $[\epsilon'_1,\epsilon_2]$ for a sufficiently large $n$), then $|\alpha-\alpha'|\le 1/n$ and so
$|H(\alpha)-H(\alpha')|\le const/n$. Applying part (a) to $k=\lfloor\alpha n\rfloor$ then gives the claim.
\end{proof}

Given the set $L$ of left vertices $\{l_1, \dots, l_n\}$, the set $R$ of right vertices $\{r_1,\dots, r_N\}$, and $0\leq \delta \leq 1$ we form a random bipartite graph $G(N, d, \delta)$ as follows. First, we overlay $d$ random permutation graphs and then we draw edges $(l_i, r_i)$ for all positive integers $i$ for which $i \leq  \lfloor \delta N \rfloor$.

We prove that the graph $G(N, d, \delta)$ satisfies certain expansion and pair-expansion properties with high probability.

For the case of pair-expansion we restrict ourselves to the case $\delta \leq \frac12$ as it allows us to prove better constants.

\begin{prop}\label{pairexpansion} Consider some constants $d\in\mathbb N$, $\delta\in[0,1/2]$, $\epsilon_1,\epsilon_2$ with $0<\epsilon_1<\epsilon_2<1$,  and $\gamma > \frac12$ such that $2\epsilon_2\gamma < 1$. Suppose that 
\begin{equation}
d > 1+\gamma \cdot \frac{1-\epsilon_1}{1-2\gamma\epsilon_1}\label{eq:pairexp1}
\end{equation}
and for any $\alpha\in(0,\epsilon_2]$ 
\begin{equation} d + p_\alpha > \frac{H(\alpha)+H(\alpha \beta)}{H(\alpha) - H(1 / \beta)\alpha \beta}\label{eq:pairexp2}
\end{equation}
where $p_\alpha$ satisfies the following:
\begin{enumerate}[(i)]
\item If $\alpha\in(0,\epsilon_1)$ then $p_\alpha=0$.
\item If $\alpha\in[\epsilon_1,\epsilon_2]$ then
\begin{align}
H(\alpha)(1-p_{\alpha})+2\gamma p_\alpha \alpha H\left(\frac{1}{2\gamma}\right) +H(y)> \nonumber \\
\delta H\left(\frac{y}{\delta} \right) + (1-\delta)H\left(\frac{\alpha-y}{1-\delta}\right) + \gamma \alpha H\left(\frac{y}{\gamma\alpha} \right) \label{eq:pairexp3}
\end{align}
\end{enumerate}
for any $y \in [0,\gamma \alpha] \cap [\alpha+\delta-1,\delta]$ (or $c_\alpha=0$ if $\delta=0$).

Then graph $G(N,d,\delta)$ is an $(N,\alpha,\gamma)$-pair-expander for any $\alpha\in[0,\epsilon_2]$ with probability $1-o(1)$.

\end{prop}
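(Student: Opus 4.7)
The plan is a Bassalygo-style probabilistic union bound~\cite{Bas81}, adapted to the pair-expansion condition (rather than ordinary vertex expansion) and to the fractional-degree construction in which $d$ independent random permutations are augmented by a partial deterministic matching of size $\lfloor\delta N\rfloor$. For each admissible $\alpha$ I would bound the failure probability by $e^{-\Omega(N)}$, and then a union bound over the $O(N)$ values of $\alpha$ yields the $1-o(1)$ overall probability.

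Fix $\alpha\in(0,\epsilon_2]$, $k=\lfloor\alpha N\rfloor$, and $g=\lfloor\gamma k\rfloor$. The bad event for a fixed $U\subseteq L$ with $|U|=k$ is $\Gamma(U)\subseteq\bigcup P$ for some set $P$ of $g-1$ pairs. Parameterize $U$ by $y=|U\cap\{l_1,\dots,l_{\lfloor\delta N\rfloor}\}|/N$: the deterministic edges force the $yN$ pairs indexed by this intersection to lie in $P$, and each random permutation independently contributes the factor $\binom{2g}{k}/\binom{N}{k}$. Stratifying the union bound by $y$ produces an upper bound of the form
\[
\sum_{y}\binom{\lfloor\delta N\rfloor}{yN}\binom{(1-\delta)N}{(\alpha-y)N}\binom{N/2-yN}{g-1-yN}\Bigl(\binom{2g}{k}\Big/\binom{N}{k}\Bigr)^{d}.
\]

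Next, I would apply Lemma~\ref{stirling} to rewrite every binomial coefficient as an entropy, up to $O(\log N/N)$ error. The parameter $p_\alpha$ appears as the weight of a convex combination between the full bound above and a coarser variant in which $\binom{N/2-yN}{g-1-yN}$ is replaced by $\binom{N/2}{g}$ (effectively discarding the deterministic-edge refinement). After algebraic rearrangement the resulting condition is exactly~(\ref{eq:pairexp3}), and~(\ref{eq:pairexp2}) then ensures that the total exponent is strictly negative for every admissible $y\in[0,\gamma\alpha]\cap[\alpha+\delta-1,\delta]$. The regime $\alpha\in(0,\epsilon_1)$ would be handled separately, since the entropic ratio in~(\ref{eq:pairexp2}) degenerates as $\alpha\to 0$: for this range I would perform a direct pair-collision count, showing under~(\ref{eq:pairexp1}) that the number of distinct pairs hit by $U$ exceeds $\gamma k$ with exponentially small failure probability, uniformly over $U$ of size at most $\epsilon_1 N$.

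The main obstacle will be verifying that~(\ref{eq:pairexp3}) is strong enough across the entire admissible range of $y$: the two bounds being combined trade off differently with $y$, so $p_\alpha$ has to be chosen as a saddle-type weight that dominates every $y$ simultaneously. The $O(\log N/N)$ slack of Lemma~\ref{stirling} is then absorbed by the strictness of the inequalities~(\ref{eq:pairexp1})--(\ref{eq:pairexp3}), and a union bound over the $O(N)$ values of $\alpha$ finishes the argument.
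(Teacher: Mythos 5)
Your overall plan coincides with the paper's: a Bassalygo-style union bound over bad sets $U$ and over collections of $\lfloor\gamma k\rfloor-1$ pairs, stratified by the cardinality $i=yN$ of $U\cap\{l_1,\dots,l_{\lfloor\delta N\rfloor}\}$, entropy estimates via Lemma \ref{stirling}, the reading of $p_\alpha$ as a fractional number of extra permutations, and a separate treatment of small $\alpha$ (the paper does this through Lemma \ref{pairprob}, Proposition \ref{pairexp} for the $\delta=0$ part, and the unnumbered proposition that follows it for $\alpha\in[\epsilon_1,\epsilon_2]$). However, two of your steps do not go through as stated. First, the bookkeeping: your stratified bound counts admissible pair-sets by $\binom{N/2-yN}{\lfloor\gamma k\rfloor-1-yN}=\binom{N/2}{\lfloor\gamma k\rfloor-1}\binom{\lfloor\gamma k\rfloor-1}{yN}\big/\binom{N/2}{yN}$, so at the entropy scale the $y$-dependent correction to the coarse bound is $\gamma\alpha H\bigl(\tfrac{y}{\gamma\alpha}\bigr)-\tfrac12 H(2y)$, and after rearrangement you arrive at (\ref{eq:pairexp3}) with $H(y)$ replaced by $\tfrac12H(2y)$. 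Since $\tfrac12H(2y)\le H(y)$, that is a strictly \emph{stronger} requirement, not ``exactly (\ref{eq:pairexp3})''. The paper lands exactly on (\ref{eq:pairexp3}) because Lemma \ref{pairprob} bounds the deterministic-edge factor by $\binom{\lfloor\gamma k\rfloor-1}{i}\big/\binom{N}{i}$ (denominator $\binom{N}{i}$, not $\binom{N/2}{i}$) and then compares each term $R_i$ against $L=\binom{N}{k}\bigl(\binom{2\lfloor\gamma k\rfloor-2}{k}/\binom{N}{k}\bigr)^{p_\alpha}$, finishing with (\ref{eq:pairexp2}). So you must either reproduce the paper's accounting or verify the stronger $\tfrac12H(2y)$ version of (\ref{eq:pairexp3}); asserting exact agreement is a gap in your argument.

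Second, the range $\alpha\in(0,\epsilon_1)$. A per-$\alpha$ failure bound of $e^{-\Omega(N)}$ is impossible for sets of constant size $k$ (there the failure probability is only polynomially small in $N$), so the framing ``$e^{-\Omega(N)}$ for each $\alpha$, then union over $O(N)$ values'' must be replaced by showing that the sum over $k$ is $o(1)$, as the paper does. More substantively, a crude collision count of the kind you propose is only valid for $k\le\varepsilon N$ with $\varepsilon$ a small constant determined by $d$ and $\gamma$ (it needs roughly $2\gamma e^{1+\gamma}(2\gamma k/N)^{d-\gamma-1}<1$); it is not justified, under (\ref{eq:pairexp1}) alone, uniformly up to $k=\epsilon_1N$ for a fixed constant $\epsilon_1$. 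The paper covers $(0,\epsilon_1]$ by applying Proposition \ref{pairexp} to $G(N,d,0)$ (legitimate because the $\delta$-edges only add neighbours): sizes $k\le\varepsilon N$ use $d>1+\gamma$, while $k\in[\varepsilon N,\epsilon_1N]$ use the entropy exponent, whose convexity is guaranteed by (\ref{eq:pairexp1}) and whose negativity at the endpoint is supplied by the hypothesis (\ref{eq:pairexp2}) with $p_\alpha=0$, i.e.\ by (\ref{eq:pairexpnonfrac}). That second ingredient is available to you in the hypotheses, but your sketch omits it, and without it the step for intermediate sizes below $\epsilon_1$ fails.
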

Here and below probability $1-o(1)$ is viewed as a function of $N$. It is thus strictly positive for a sufficiently large $N$.

\begin{prop}\label{expansion} Consider some constants $d\in\mathbb N$, $\delta\in[0,1]$, $\epsilon_1,\epsilon_2$ with $0<\epsilon_1<\epsilon_2<1$  and a piecewise linear function $e(\alpha)$ on $[0,1]$ satisfying $\alpha<e(\alpha)<1$ for any $\alpha\in(0,1)$. Suppose that for any $\alpha\in(0,1)$ holds 
\begin{equation}
d + c_\alpha > \frac{H(\alpha) + H(e(\alpha))}{H(\alpha) - H\left( \frac{\alpha}{e(\alpha)} \right) e(\alpha)},\label{eq:exp1}
\end{equation}
where $c_\alpha$ satisfies the following:
\begin{enumerate}[(i)]
\item If $\alpha\in(0,\epsilon_1)\cup(\epsilon_2,1)$ then $c_\alpha=0$.
\item  If $\alpha\in[\epsilon_1,\epsilon_2]$ then
\end{enumerate}
\begin{eqnarray} H(\alpha)(1-c_\alpha)+c_\alpha e(\alpha) H\left( \frac{\alpha}{e(\alpha)} \right) + H(y) > \\
\delta H\left(\frac{y}{\delta}\right) + (1-\delta)H\left(\frac{\alpha-y}{1-\delta} \right)+ e(\alpha) H\left( \frac{y}{ e(\alpha)} \right)\label{eq:exp2}
\end{eqnarray}
for any $y \in [0,\alpha] \cap [\alpha+\delta-1,\delta]$ (or $c_\alpha=0$ if $\delta=0$).
Moreover, suppose that
$$d > 2+e'(0)\qquad \text{and} \qquad d> 1+\frac{2}{e'(1)}$$

Then graph $G(N,d,\delta)$ is an $(N,\alpha,e(\alpha))$-expander for any $\alpha\in[0,1]$ with probability $1-o(1)$.
\end{prop}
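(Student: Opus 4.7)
The plan is a first-moment argument in the spirit of Bassalygo \cite{Bas81}, adapted to the fractional-degree model. Fix $\alpha\in[0,1]$, set $k=\lfloor\alpha N\rfloor$ and $m=\lceil e(\alpha)N\rceil-1$, and bound the expected number of subsets $S\subseteq L$ of size $k$ with $|\Gamma(S)|\leq m$. The event $|\Gamma(S)|\leq m$ is equivalent to $\Gamma(S)\subseteq T$ for some $T\subseteq R$ of size $m$, so we union-bound over $S$ and $T$. Parametrize $S$ by $y\in[\max(0,\alpha+\delta-1),\min(\alpha,\delta)]$, where $yN$ is the number of indices $i\leq\lfloor\delta N\rfloor$ with $l_i\in S$; the corresponding $yN$ deterministic right-endpoints are then forced into $T$, while each of the $d$ independent random permutations sends $S$ into $T$ with probability $\binom{m}{k}/\binom{N}{k}$. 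This gives
\begin{equation*}
E\bigl[\text{number of bad }S\text{ of size }k\bigr]\leq\sum_y\binom{\lfloor\delta N\rfloor}{yN}\binom{(1-\delta)N}{(\alpha-y)N}\binom{(1-y)N}{(e(\alpha)-y)N}\left(\frac{\binom{m}{k}}{\binom{N}{k}}\right)^{\!d}.
\end{equation*}

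For $\alpha\in[\epsilon_1,\epsilon_2]$, Lemma \ref{stirling}(b) converts each binomial into its entropy exponential uniformly in $\alpha$ and $y$. Applying the identity $(1-y)H\!\left(\frac{e(\alpha)-y}{1-y}\right)=H(e(\alpha))+e(\alpha)H\!\left(\frac{y}{e(\alpha)}\right)-H(y)$, the per-$N$ exponent splits into $H(e(\alpha))-d[H(\alpha)-e(\alpha)H(\alpha/e(\alpha))]$ plus the $y$-dependent part $\delta H(\frac{y}{\delta})+(1-\delta)H(\frac{\alpha-y}{1-\delta})+e(\alpha)H(\frac{y}{e(\alpha)})-H(y)$. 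Inequality (\ref{eq:exp2}) bounds the latter by $H(\alpha)-c_\alpha[H(\alpha)-e(\alpha)H(\alpha/e(\alpha))]$, and then (\ref{eq:exp1}) makes the total strictly negative. The slack parameter $c_\alpha$ couples the two inequalities: increasing it relaxes (\ref{eq:exp2}) but tightens (\ref{eq:exp1}). An exponentially small bound for each $\alpha$ survives the union bound over the $O(N)$ feasible values of $k$.

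For $\alpha\in(0,\epsilon_1)\cup(\epsilon_2,1)$ the Stirling approximation is no longer uniform and we estimate binomials directly. For small $k$, using $\binom{N}{k}\leq(eN/k)^k$ and $\binom{m}{k}/\binom{N}{k}\leq(m/N)^k$ together with $m\sim e'(0)k$, one checks that the expected size-$k$ bad count scales polynomially in $N$ with exponent $k(1+e'(0)-d)$ to leading order; the hypothesis $d>2+e'(0)$ provides enough decay per size $k$ to absorb the polynomial overhead from summing over $k\geq 1$. The case $\alpha$ close to $1$ is handled by the complementary picture: working with $\bar S=L\setminus S$ and $R\setminus\Gamma(S)$, and using $1-e(\alpha)\sim e'(1)(1-\alpha)$ near $\alpha=1$, the complementary expansion ratio is $1/e'(1)$, so the symmetric requirement becomes $d>1+2/e'(1)$. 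The main technical obstacle is this boundary analysis: besides tracking polynomial factors sharply enough to justify the constants in $d>2+e'(0)$ and $d>1+2/e'(1)$, one must verify that the worst-case $y$ in the sum lies in the feasible interval $[\max(0,\alpha+\delta-1),\min(\alpha,\delta)]$ so that (\ref{eq:exp2}) indeed dominates the $y$-dependent contribution.
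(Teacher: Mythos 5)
Your argument is, in substance, the paper's own: the union bound over pairs $(S,T)$ stratified by the overlap fraction $y$ with the deterministic edges is just a reorganization of the count in Lemma \ref{expansionprob}, the entropy identity and the coupling of (\ref{eq:exp1}) with (\ref{eq:exp2}) through the slack $c_\alpha$ reproduce the case $\alpha\in[\epsilon_1,\epsilon_2]$ of Proposition \ref{expansionclosedint}, and your boundary treatments (small sets via $\binom{N}{k}\le(eN/k)^k$ with $d>2+e'(0)$, sets near $N$ via the complement with $d>1+2/e'(1)$) are those of Propositions \ref{closetozero} and \ref{closetoone}.

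There is, however, a concrete coverage gap for $\alpha$ outside $[\epsilon_1,\epsilon_2]$ but bounded away from $0$ and $1$. You split $(0,1)$ into only two regimes, asserting that on $(0,\epsilon_1)\cup(\epsilon_2,1)$ ``the Stirling approximation is no longer uniform'' and that the two direct binomial estimates cover this whole set. Neither claim is right: Lemma \ref{stirling} is uniform on any compact subinterval of $(0,1)$ and only degenerates near the endpoints, while your direct estimates reach only up to some small $\varepsilon$. Indeed, the small-set bound has the shape $k\,\bigl(C\,(k/N)^{d-e'(0)-2}\bigr)^{k}$ and decays only when $C(k/N)^{d-e'(0)-2}<1$, which need not hold for $k/N$ as large as $\epsilon_1$ (for the paper's constants it fails well before $0.21$), and the substitution $m\sim e'(0)k$ is valid only on the first linear piece of $e$; the complementary argument near $\alpha=1$ has the same limitations. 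So the ranges $[\varepsilon,\epsilon_1)$ and $(\epsilon_2,1-\varepsilon]$ are handled by neither of your cases, and there $c_\alpha=0$, so (\ref{eq:exp2}) is not available. The missing step — which is how the paper proceeds — is to run the entropy argument on these compact ranges too, using only (\ref{eq:exp1}) with $c_\alpha=0$: either discard the $\lfloor\delta N\rfloor$ deterministic edges (extra edges only improve expansion, so it suffices to prove the property for the $d$ permutations alone, i.e.\ Proposition \ref{expansionclosedint} with $\delta=0$), or note in your unified exponent that $\delta H(y/\delta)+(1-\delta)H\bigl(\tfrac{\alpha-y}{1-\delta}\bigr)\le H(\alpha)$ by concavity and $e(\alpha)H\bigl(\tfrac{y}{e(\alpha)}\bigr)\le H(y)$, so the $y$-dependent part is at most $H(\alpha)$ and (\ref{eq:exp1}) alone makes the exponent uniformly negative there. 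With that regime added (and the final union over all regimes), your proof closes.
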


Now we show that Propositions \ref{pairexpansion} and \ref{expansion} imply Theorem \ref{expexist}.

\begin{proof}[Proof of Theorem \ref{expexist}] For part (a) we use Proposition \ref{expansion} with $d=5$, $\delta = 0.325$, $\epsilon_1 = 0.21$, $\epsilon_2 = 0.48$ and $c_\alpha = 0.18$ for $\alpha \in [0.21,0.48]$. Inequality (\ref{eq:exp1}) can be checked directly and for inequality (\ref{eq:exp2}) we give a computer-aided proof in Appendix B.

Part (b) is ensured similarly from Proposition \ref{pairexp}. We take $d=5$, $\delta = 0.325$, $\epsilon_1 = 0.3$, $\epsilon_2 = 0.3322$, $\gamma = 1$, and $p_\alpha = 0.45$ for $\alpha \in [0.3,0.3322]$.
Inequalities (\ref{eq:pairexp1}) and (\ref{eq:pairexp2}) can again be checked directly and for inequality (\ref{eq:pairexp3}) we give a computer-aided proof in Appendix B.

All in all, the random graph $G(N,5,0.325)$ both (a) and (b) with probability at least $1-o(1)-o(1)$, which is $1-o(1)$. In particular, this probability
is strictly positive for a sufficiently large $N$.
\end{proof}

\section{Proof of Proposition \ref{pairexpansion}}

First, we estimate the probability of the pair-expansion property and then we decompose Proposition \ref{pairexpansion} naturally into its fractional and non-fractional part.

\begin{lemma}\label{pairprob} Let $d \in \mathbb{N}$, $0 \leq \delta < \frac12$, $k \leq N$, and $G = G(N,d,\delta)$ with $N$ left vertices $L$ and $N$ right vertices $R$. Then the probability that some $U \subset L$, $|U| = k$ fails to have at least $m$ ($k/2 < m < N/2$) neighboring pairs is at most
$${N/2 \choose m-1}  \left( \frac{{2m - 2 \choose k}}{{N \choose k}} \right)^d  \sum_{i=0}^{m-1} {\lfloor \delta N \rfloor \choose i}{N- \lfloor \delta N \rfloor \choose k-i} { m - 1 \choose i} \bigg/ {N \choose i}
$$
which in the case $\delta = 0$ reduces to
$${N \choose k} {N/2 \choose m-1}  \left( \frac{{2m - 2 \choose k}}{{N \choose k}} \right)^d.$$
\end{lemma}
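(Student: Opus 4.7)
The plan is to union-bound over pairs $(U,P)$, where $U\subseteq L$ with $|U|=k$ is a candidate ``failing'' set and $P$ is a family of $m-1$ pairs in $R$ that would witness the failure by containing the entire neighborhood $\Gamma(U)$. For the baseline $\delta=0$ case this factors cleanly: there are $\binom{N}{k}$ choices of $U$, $\binom{N/2}{m-1}$ choices of $P$, and, independently of $(U,P)$, each of the $d$ random permutations maps $U$ into the $2m-2$ vertices covered by $P$ with probability $\binom{2m-2}{k}/\binom{N}{k}$; independence of the permutations gives the $d$-th power. This already yields the stated $\delta=0$ formula.

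For $\delta>0$ the deterministic edges $(l_j,r_j)$ with $j\le\lfloor\delta N\rfloor$ couple $U$ and $P$. The key structural observation is: since $\delta\le 1/2$, every such $j$ lies in the first half, so $r_j$ sits in the pair $(r_j,r_{j+N/2})$ and distinct $j$'s yield distinct pairs. Consequently, if $|U\cap L_\delta|=i$ where $L_\delta=\{l_1,\ldots,l_{\lfloor\delta N\rfloor}\}$, then the $\delta$-edges deterministically contribute $i$ specific ``forced pairs'' to the pair-neighborhood of $U$, and any witnessing $P$ is obliged to contain them.

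I would then stratify the union bound by $i:=|U\cap L_\delta|$. For each $i$ there are $\binom{\lfloor\delta N\rfloor}{i}\binom{N-\lfloor\delta N\rfloor}{k-i}$ admissible $U$'s; once such a $U$ is fixed, the number of $P$'s containing the $i$ forced pairs is $\binom{N/2-i}{m-1-i}$ (the remaining $m-1-i$ pairs are chosen from the $N/2-i$ non-forced ones); and the $d$-permutation probability $\bigl(\binom{2m-2}{k}/\binom{N}{k}\bigr)^d$ is independent of the $\delta$-edge structure. Multiplying, summing over $i$, and finally applying the elementary identity
\[
\binom{N/2-i}{m-1-i} \;=\; \binom{N/2}{m-1}\,\binom{m-1}{i}\bigg/\binom{N/2}{i}
\]
to pull the factor $\binom{N/2}{m-1}\bigl(\binom{2m-2}{k}/\binom{N}{k}\bigr)^d$ out of the sum rewrites the bound in the form of the lemma. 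Specializing to $\delta=0$ kills all $i\ge 1$ terms and reproduces the second formula as a sanity check.

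The main (really bookkeeping) obstacle is the coupling induced by the forced pairs: a naive independent product of union bounds over $U$'s and $P$'s overcounts admissible pairs once $\delta>0$, and the stratification by $i$, together with the observation that the forced pairs are uniquely determined by $U$, is exactly what reins this in. No probabilistic subtlety beyond the independence of the $d$ permutations and their independence from the fixed edges is needed.
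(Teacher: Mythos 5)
Your strategy is the same as the paper's: fix $U$, union-bound over witness families $V$ of $m-1$ pairs, factor out the permutation probability $\bigl(\binom{2m-2}{k}/\binom{N}{k}\bigr)^d$, stratify by $i=|U\cap\{l_1,\dots,l_{\lfloor\delta N\rfloor}\}|$, and use that the deterministic edges force $i$ distinct pairs into $V$ (this is where $\delta<1/2$ enters). All of that is sound, and your identity $\binom{N/2-i}{m-1-i}=\binom{N/2}{m-1}\binom{m-1}{i}\big/\binom{N/2}{i}$ is correct. The gap is the very last claim: this rewriting yields the summand $\binom{\lfloor\delta N\rfloor}{i}\binom{N-\lfloor\delta N\rfloor}{k-i}\binom{m-1}{i}\big/\binom{N/2}{i}$, whereas the lemma has $\binom{N}{i}$ in the denominator. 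Since $\binom{N}{i}\big/\binom{N/2}{i}=\prod_{t=0}^{i-1}\frac{N-t}{N/2-t}\ge 2^{i}$, what you obtain is strictly weaker than the stated bound, by a factor that is exponential in $i$, so the assertion that your bound ``rewrites in the form of the lemma'' is false; you have proved the lemma with $\binom{N}{i}$ replaced by $\binom{N/2}{i}$.

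Moreover, the stated form cannot be reached by tightening your per-set accounting: for a class-$i$ set the union bound over witnesses gives exactly $\binom{N/2-i}{m-1-i}\bigl(\binom{2m-2}{k}/\binom{N}{k}\bigr)^d$, and already for $U\subseteq\{l_1,\dots,l_{\lfloor\delta N\rfloor}\}$ with $m-1=k$ the true failure probability is $\bigl(\binom{2m-2}{k}/\binom{N}{k}\bigr)^d$, which exceeds the share $\binom{N/2}{m-1}\bigl(\binom{2m-2}{k}/\binom{N}{k}\bigr)^d\big/\binom{N}{m-1}$ that the stated formula allots to each such $U$; so no bound on individual $\Pr[U\text{ fails}]$ can produce the $\binom{N}{i}$ denominator. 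For comparison, the paper's own proof follows exactly your steps and at this point asserts $p_U\le\binom{m-1}{i}\big/\binom{N}{i}$, where $p_U$ is the probability that a uniformly random family of $m-1$ of the $N/2$ pairs contains the $i$ forced pairs; the exact value of that probability is $\binom{m-1}{i}\big/\binom{N/2}{i}$, and the $\binom{N}{i}$ denominator is correct only in the vertex version, Lemma \ref{expansionprob}. So the issue you must confront is the mismatch between your (correct) count and the lemma's statement, and it is not cosmetic: the $1/\binom{N}{i}$ factor is what later produces the $H(y)$ term in (\ref{eq:pairexp3}), which under the $\binom{N/2}{i}$ version weakens to $\tfrac12 H(2y)$.
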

\begin{proof} Let us first fix a set $U \subset L$ of size $k$ and compute the probability it fails in the pair-expansion. That happens if and only if there exists $V \subset R$ formed by $m-1$ pairs such that the neighbours of $U$ lie entirely in $V$. Choose $V \subset R$ consisting of $m-1$ pairs randomly. For the $d$ complete permutations the probability is
$$
\left( \frac{{2m-2 \choose k}}{{N \choose k}} \right)^d.
$$
Let the probability concerning the extra $\lfloor \delta N \rfloor$ edges be $p_U$.
From the union bound over subsets $V$ and also over subsets $U$ of size $k$, we upper bound the probability of failing in pair-expansion as
$$
{N/2 \choose m-1}  \left( \frac{{2m - 2 \choose k}}{{N \choose k}} \right)^d  \sum_U p_U.
$$
The sum can be upper-bounded using the union bound over the possible cardinalities of $U \cap \{l_i| i = 1,\dots, \lfloor \delta N \rfloor \}$ as follows
$$
\sum_U p_U \leq \sum_{i=0}^{\min(m-1,k)} {\lfloor \delta N \rfloor \choose i}{N- \lfloor \delta N \rfloor \choose k-i} { m - 1 \choose i} \bigg/ {N \choose i}
$$
where we use the fact that $\lfloor \delta N \rfloor$ edges connect disjoint pairs as $\delta < 1/2$. This proves the first part of the claim and for the second one we may for example observe that $p_U = 1$ for any $U$ when $\delta=0$.
\end{proof}

\begin{prop}\label{pairexp} Let $d \in \mathbb{N}$, $\alpha \in (0,1)$, $\gamma > 1/2$, and $2\alpha \gamma < 1$. Then the graph $G(N,d,0)$ is a $(\alpha',\gamma)$-pair-expander for each $0 \leq \alpha' \leq \alpha$ with probability $1-o(1)$ if
\begin{equation}
d > \frac{H(\alpha) + \frac12 H(2\gamma\alpha)}{H(\alpha) - 2\gamma\alpha H\left(\frac{1}{2\gamma}\right)}\quad \text{and} \quad d > 1+\gamma \cdot \frac{1-\alpha}{1-2\gamma\alpha}\;\;.\label{eq:pairexpnonfrac}
\end{equation}
\end{prop}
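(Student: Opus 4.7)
\vspace{3pt}
\noindent\textbf{Proof proposal.~~}
The plan is to apply Lemma~\ref{pairprob} with $\delta=0$ at every size $k\in\{1,\ldots,\lfloor\alpha N\rfloor\}$ with $m=m_k:=\lfloor\gamma k\rfloor$, and union-bound over $k$. When $m_k\le\lceil k/2\rceil$ the bound is $0$ (and in fact pair-expansion is deterministic, since any single permutation already gives $U$ at least $\lceil k/2\rceil$ distinct pair-neighbours), so the only terms to control are
\[
Q(k):=\binom{N}{k}\binom{N/2}{m_k-1}\left(\frac{\binom{2m_k-2}{k}}{\binom{N}{k}}\right)^{d}
\]
for those $k$ with $m_k>\lceil k/2\rceil$. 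I would split $\sum_k Q(k)$ at $k=\epsilon_0 N$ for a small fixed $\epsilon_0>0$; the two hypotheses on $d$ govern the bulk and the sparse regimes respectively.

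In the bulk regime $\beta:=k/N\in[\epsilon_0,\alpha]$, Lemma~\ref{stirling}(b) gives $\tfrac{1}{N}\log Q(k)=g(\beta)+O(\tfrac{\log N}{N})$ with
\[
g(\beta):=H(\beta)+\tfrac{1}{2} H(2\gamma\beta)-d\bigl[H(\beta)-2\gamma\beta\,H(\tfrac{1}{2\gamma})\bigr].
\]
The first hypothesis is precisely $g(\alpha)<0$. I expect the rational function $f(\beta):=\frac{H(\beta)+\frac{1}{2} H(2\gamma\beta)}{H(\beta)-2\gamma\beta\,H(1/(2\gamma))}$ to be monotonically non-decreasing on $(0,1/(2\gamma))$, so that $d>f(\alpha)$ forces $g(\beta)<0$ uniformly on $[\epsilon_0,\alpha]$ and the contribution of this regime to the union bound is at most $N\cdot e^{-cN}=o(1)$. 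Checking the monotonicity of $f$ via $f'$ and the strict concavity of $H$ is the central analytic step.

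For the sparse regime $k=o(N)$, crude Stirling bounds of the form $\binom{N}{k}\le (eN/k)^k$ applied to each factor of $Q(k)$ yield $\log Q(k)\lesssim(1+\gamma-d)\,k\log(N/k)+O(k)$, so already $d>1+\gamma$ makes each term decay super-polynomially in $N/k$. The stronger requirement $d>1+\gamma\cdot\frac{1-\alpha}{1-2\gamma\alpha}$ is what is needed to keep this estimate uniform across the full admissible range $\alpha<1/(2\gamma)$, in particular handling the joint boundary where $k/N$ approaches $\alpha$ and $\alpha$ approaches $1/(2\gamma)$, so that the denominator of $f$ nearly vanishes. Summing the sparse terms then gives $o(1)$ as well, and combining the two regimes completes the union bound.

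The main obstacle I foresee is the monotonicity claim for $f$ on $(0,\alpha]$, which is what lets the single inequality $g(\alpha)<0$ propagate to every intermediate $\beta$. If $f$ should fail to be monotone one must partition $(0,\alpha]$ into sub-intervals and invoke both hypotheses jointly; either way, the delicate calculus on $g$ near the endpoints $\beta=\alpha$ and $\beta\to 0^+$ is where the bulk of the technical work sits.
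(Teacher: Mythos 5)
Your overall skeleton matches the paper's proof: apply Lemma~\ref{pairprob} with $\delta=0$ and $m=\lfloor\gamma k\rfloor$, union-bound over $k\le\lfloor\alpha N\rfloor$, split at $k=\varepsilon N$, handle the sparse range with the crude bounds $\binom{n}{k}\le(en/k)^k$ and $\binom{n}{k}/\binom{m}{k}\le(n/m)^k$ (where indeed only $d>1+\gamma$ is needed), and handle the bulk range by showing the exponential rate $g(\beta)=H(\beta)+\tfrac12 H(2\gamma\beta)-d\bigl[H(\beta)-2\gamma\beta H(\tfrac{1}{2\gamma})\bigr]$ is uniformly negative. But there is a genuine gap exactly where you flag it: the claim that $f(\beta)=\frac{H(\beta)+\frac12 H(2\gamma\beta)}{H(\beta)-2\gamma\beta H(1/(2\gamma))}$ is monotone non-decreasing on $(0,1/(2\gamma))$ is never proved, and it is the only thing that propagates $g(\alpha)<0$ to all $\beta\in[\varepsilon,\alpha]$. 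You also misattribute the second hypothesis: since $\alpha$ is a fixed constant of the proposition, there is no ``joint boundary'' issue in the sparse regime, and $d>1+\gamma$ (which the second hypothesis implies, as $\frac{1-\alpha}{1-2\gamma\alpha}\ge 1$ for $\gamma>1/2$) is all that regime uses. The second hypothesis is in fact the missing ingredient for your bulk regime.

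The paper's route avoids the monotonicity question entirely: it shows that under $d>1+\gamma\cdot\frac{1-\alpha}{1-2\gamma\alpha}$ the function $g$ (called $F$ there) is \emph{convex} on $[\varepsilon,\alpha]$, hence attains its maximum at an endpoint; $g(\varepsilon)<0$ follows from $d>1+\gamma$, and $g(\alpha)<0$ is precisely the first hypothesis. Concretely, with natural logarithms $H''(x)=-\frac{1}{x(1-x)}$, so
\begin{equation*}
g''(x)=\frac{d-1}{x(1-x)}-\frac{\gamma}{x(1-2\gamma x)},
\end{equation*}
which is nonnegative exactly when $d\ge 1+\gamma\cdot\frac{1-x}{1-2\gamma x}$; since $\frac{1-x}{1-2\gamma x}$ is increasing in $x$ for $\gamma>1/2$, the worst case on $[\varepsilon,\alpha]$ is $x=\alpha$, i.e.\ the second hypothesis. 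I recommend you replace the unproven monotonicity step (or the vague fallback of ``partitioning and invoking both hypotheses jointly'') by this convexity argument; as written, the central analytic step of your bulk regime is a conjecture, so the proof is incomplete.
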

\begin{proof} For sets of size $k$ where $1 \leq k \leq \lfloor \alpha N \rfloor$ the probability of failing in pair-expansion is by Lemma \ref{pairprob} at most
$${N \choose k} {N/2 \choose \lfloor \gamma k \rfloor -1}  \left( \frac{{2\lfloor \gamma k \rfloor - 2 \choose k}}{{N \choose k}} \right)^d$$
and after using the union bound over values of $k$, the total probability of failing is at most
$$
\sum_{k=1}^{\lfloor \alpha N \rfloor} {N \choose k}{N/2 \choose \lfloor \gamma k \rfloor -1}  \left( \frac{{2\lfloor \gamma k \rfloor - 2 \choose k}}{{N \choose k}} \right)^d.
$$

We will show that each summand is (significantly) smaller than $1/(\alpha N)$ for large $N$. Let us distinguish two cases.

\begin{enumerate}[(a)]
\item $k \leq \varepsilon N$: Note that (\ref{eq:pairexpnonfrac}) implies also $d > 1+\gamma$. Then standard estimates on binomial coefficients
$$\binom{n}{k} \leq \left(\frac{ne}{k}\right)^k, \qquad \binom{n}{k} \Big/ \binom{m}{k} \leq \left(\frac{n}{m}\right)^k \quad \text{for} \, n \leq m
$$
give
\begin{align*} \alpha N &{N \choose k}{N/2 \choose \lfloor \gamma k \rfloor -1}  \left( \frac{{2\lfloor \gamma k \rfloor - 2 \choose k}}{{N \choose k}} \right)^{d}\\ &\leq \alpha N \left( \frac{eN}{k} \right)^k  \left(\frac{eN}{2(\gamma k-1)} \right)^{\gamma k-1} \left( \frac{2\gamma k}{N} \right)^{kd} \\
&\leq \frac{2\alpha\gamma k}{e}  \left( \frac{eN}{k} \right)^k  \left(\frac{eN}{2(\gamma k-1)} \right)^{\gamma k} \left( \frac{2\gamma k}{N} \right)^{kd} \\
&= \frac{2\alpha\gamma k}{e} \left( 2\gamma e^{1+\gamma} \left( 2\gamma \cdot \frac{k}{N} \right)^{d-\gamma-1} \left(1+\frac{1}{\gamma k -1}\right)^\gamma  \right)^k \\
&\leq C_1k(C_2 \varepsilon^{d-\gamma-1})^k,
\end{align*}
for some $C_1, C_2$ independent from $N$ and $k$.
By choosing suitable constant $\varepsilon > 0$ this can be made arbitrarily small for all $k \leq \varepsilon N$ if we make use of $d > 1+\gamma$.

\item $\varepsilon N  < k \leq \alpha N$: As both $N$ and $k$ are now arbitrarily large, we may use the Stirling estimates and obtain
\begin{align*} 
&\alpha N {N \choose k}{N/2 \choose \lfloor \gamma k \rfloor -1}  \left( \frac{{2\lfloor \gamma k \rfloor - 2 \choose k}}{{N \choose k}} \right)^{d} \leq \\
&\exp \!\left( \! N \! \left( \! H(x) + \frac12 H(2\gamma x) +2d\gamma x H\left(\frac{1}{2\gamma} \right) - dH(x)  \right)\! +\!O(\log N)\! \right)
\end{align*}
where $x = k/N$.
It is straightforward to verify that the function
$$F(x) = H(x) + \frac12 H(2\gamma x) +2d\gamma x H\left(\frac{1}{2\gamma} \right) - dH(x) $$ is convex on $[\varepsilon,\alpha]$ if
$$d > 1+\gamma \cdot \frac{1-\alpha}{1-2\gamma\alpha}$$
which we ensured in (\ref{eq:pairexpnonfrac}).
Therefore $F$ attains its maximum on $[\varepsilon, \alpha]$ at its endpoints. We easily get $F(\varepsilon) < 0$ if $d > 1+\gamma$ and $F(\alpha) < 0$ if
$$
d > \frac{H(\alpha) + \frac12 H(2\gamma\alpha)}{H(\alpha) - 2\gamma\alpha H\left(\frac{1}{2\gamma}\right)}.
$$
\end{enumerate}
This implies the result.
\end{proof}

\begin{prop} Let $d \in \mathbb{N}$, $0 \leq \delta < 1/2$, $0< \epsilon_1 < \epsilon_2 < 1$, $\gamma > 1/2$, and $2\epsilon_2 \gamma < 1$. Then the graph $G(N,d,\delta)$ is a $(N,\alpha,\gamma)$-pair-expander for every $\alpha \in [\epsilon_1, \epsilon_2]$ with probability $1-o(1)$ if
\begin{equation} d+p_\alpha > \frac{H(\alpha) + \frac12 H(2\gamma\alpha)}{H(\alpha) - 2\gamma\alpha H\left(\frac{1}{2\gamma}\right)}, \label{eq:pairexpineq}
\end{equation}
for each $\alpha \in [\epsilon_1, \epsilon_2]$, where $p_\alpha$ is a number for which the following inequality holds:
\begin{align}
H(\alpha)(1-p_\alpha)+2\gamma p_\alpha \alpha H\left(\frac{1}{2\gamma}\right) +H(y)> \nonumber \\ \delta H\left(\frac{y}{\delta} \right) + (1-\delta)H\left(\frac{\alpha-y}{1-\delta}\right) + \gamma \alpha H\left(\frac{y}{\gamma\alpha} \right) \label{eq:pairexpineq2}
\end{align}
for any $\alpha \in [\epsilon_1, \epsilon_2]$ and any $y \in [0,\gamma \alpha] \cap [\alpha+\delta-1,\delta]$.

\end{prop}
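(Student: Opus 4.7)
The plan is to mirror the proof of Proposition~\ref{pairexp}, starting from the bound in Lemma~\ref{pairprob} with $k = \lfloor\alpha N\rfloor$ and $m = \lfloor\gamma k\rfloor$, and then taking a union bound over the $O(N)$ integers $k$ with $\alpha = k/N \in [\epsilon_1, \epsilon_2]$. Because $[\epsilon_1,\epsilon_2]$ is a compact subinterval of $(0,1/(2\gamma))$, every binomial coefficient that appears has both entries proportional to $N$ with ratios lying in a fixed compact subset of $(0,1)$, so Lemma~\ref{stirling}(b) converts each one to its entropy rate uniformly, with an error of $O(\log N)$ in the exponent.

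The first step is to estimate the outer factor $\binom{N/2}{\lfloor\gamma k\rfloor-1}\bigl(\binom{2\lfloor\gamma k\rfloor-2}{k}/\binom{N}{k}\bigr)^d$ exactly as in case (b) of Proposition~\ref{pairexp}; its log-rate (divided by $N$) equals $\tfrac{1}{2}H(2\gamma\alpha)+d\bigl(2\gamma\alpha H(1/(2\gamma))-H(\alpha)\bigr)$. The genuinely new piece is the inner sum
\[
\sum_{i=0}^{\lfloor\gamma k\rfloor-1}\binom{\lfloor\delta N\rfloor}{i}\binom{N-\lfloor\delta N\rfloor}{k-i}\binom{\lfloor\gamma k\rfloor-1}{i}\bigg/\binom{N}{i},
\]
which I would bound above by $N$ times its largest term. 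Setting $y=i/N$, the requirement that no binomial vanish restricts $y$ to the interval $[0,\gamma\alpha]\cap[\alpha+\delta-1,\delta]$ appearing in hypothesis~(\ref{eq:pairexpineq2}), and Stirling gives the log-rate of the summand indexed by $y$ as $\delta H(y/\delta)+(1-\delta)H((\alpha-y)/(1-\delta))+\gamma\alpha H(y/(\gamma\alpha))-H(y)$.

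The second step is to plug in~(\ref{eq:pairexpineq2}), which says precisely that this inner rate is strictly less than $H(\alpha)(1-p_\alpha)+2\gamma p_\alpha\alpha H(1/(2\gamma))$ for every admissible $y$. Adding this to the outer rate and simplifying, the total log-rate of the failure probability for a given $k$ becomes strictly less than
\[
\tfrac{1}{2}H(2\gamma\alpha)+H(\alpha)-(d+p_\alpha)\bigl[H(\alpha)-2\gamma\alpha H(1/(2\gamma))\bigr],
\]
which by~(\ref{eq:pairexpineq}) is strictly negative. Continuity in $\alpha$ together with compactness of $[\epsilon_1,\epsilon_2]$ then makes this negativity uniform in $\alpha$, and the outer union bound over $O(N)$ values of $k$ is absorbed into the $O(\log N)$ slack, yielding the claimed $1-o(1)$ probability.

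The main obstacle I anticipate is controlling the inner sum at boundary values of $y$ (for instance $y$ near $0$, $y=\delta$, or $y=\gamma\alpha$), where Lemma~\ref{stirling}(b) does not directly apply to every factor; such corner cases have to be handled by Lemma~\ref{stirling}(a) or by the observation that they contribute only polynomial factors in $N$ and are therefore swallowed by the $O(\log N)$ slack. One also has to verify that $[0,\gamma\alpha]\cap[\alpha+\delta-1,\delta]$ is indeed the correct support of non-vanishing summands in the regime $\delta<1/2$ and $2\epsilon_2\gamma<1$, so that the strict inequality in~(\ref{eq:pairexpineq2}) really transfers to a strictly negative exponent uniformly in $\alpha$.
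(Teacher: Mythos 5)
Your proposal is correct and follows essentially the same route as the paper: start from Lemma \ref{pairprob}, take a union bound over the $O(N)$ values of $k$, convert binomials to entropy rates via Lemma \ref{stirling}, bound the inner sum through inequality (\ref{eq:pairexpineq2}) and the outer factor through (\ref{eq:pairexpineq}), with compactness giving a uniform strictly negative exponent. The only difference is organizational: the paper compares the inner sum term-by-term against the auxiliary quantity $\binom{N}{k}\bigl(\binom{2\lfloor\gamma k\rfloor-2}{k}/\binom{N}{k}\bigr)^{p_\alpha}$ rather than adding rates directly, which is the same computation.
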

\begin{proof} 
For sets of size $k$ where $\lfloor \epsilon_1 N \rfloor \leq k \leq \lfloor \epsilon_2 N \rfloor$ the probability of failing in pair-expansion is by Lemma \ref{pairprob} at most
$${N/2 \choose \lfloor \gamma k \rfloor-1}  \left( \frac{{2\lfloor \gamma k \rfloor - 2 \choose k}}{{N \choose k}} \right)^d  \sum_{i=0}^{\lfloor \gamma k \rfloor -1} R_i$$
where
$$R_i = {\lfloor \delta N \rfloor \choose i}{N- \lfloor \delta N \rfloor \choose k-i} { \lfloor \gamma k \rfloor - 1 \choose i} \bigg/ {N \choose i}.
$$
From the union bound over feasible values of $k$ the total probability of failing in expansion is at most
$$\sum_{k=\lfloor \epsilon_1 N \rfloor}^{\lfloor \epsilon_2 N \rfloor}\left( {N/2 \choose \lfloor \gamma k \rfloor-1}  \left( \frac{{2\lfloor \gamma k \rfloor - 2 \choose k}}{{N \choose k}} \right)^d  \sum_{i=0}^{\lfloor \gamma k \rfloor -1} R_i \right).
$$
We will show that there is $c > 0$ such that for sufficiently large $N$ ($N > N_0$) each summand is at most $e^{-cN}$. Since the number of summands is linear in $N$, the conclusion will follow.

First note that both inequalities (\ref{eq:pairexpineq}) and (\ref{eq:pairexpineq2}) are strict and hold over compact sets so they can both be strengthened by some $\varepsilon > 0$ (independent of $\alpha$).

We decompose the inequality into two estimates. 

For the first one let
$$L =  {N \choose k} \left(\frac{{2 \lfloor \gamma k \rfloor - 2 \choose k}}{{N \choose k}} \right)^{p_\alpha},\quad R = \sum_{i=0}^{\lfloor \gamma k \rfloor -1} R_i.$$
We claim that $R/L < e^{-c_1N}$ for some constant $c_1 > 0$ and $N > N_0$ where $c_1$ and $N_0$ are both independent of $k$. Again it suffices to prove that for some $c_2>0$ and $N > N_0$ (both independent of $k$) we have $R_i/L < e^{-c_2N}$ for all $i$.

To this end, we use the Stirling estimates to see that for $N > N_0$
\begin{align*}
\frac{1}{N} \log (R_i/L) &<
\delta H\left(\frac{y}{\delta} \right) + (1-\delta)H\left(\frac{\alpha-y}{1-\delta}\right) \\ + \gamma \alpha H\left(\frac{y}{\gamma\alpha} \right)& -\left(H(\alpha)(1-p)+2\gamma p \alpha H\left(\frac{1}{2\gamma}\right) +H(y)\right) + \frac{\varepsilon}{2}
\end{align*}
where $\alpha = k/N$ and $y = i/N$. Moreover, by Lemma \ref{stirling} this $N_0$ does not depend on $k$ and $i$.
Using (\ref{eq:pairexpineq2}) strengthened by $\varepsilon$, we finally obtain
that for $N> N_0$ we have
$$\frac{1}{N} \log (R_i/L) < -\frac{\varepsilon}{2}$$ for all $i$, where $N_0$ is independent of $k$. This proves the estimate.

Applying this estimate, we are left to prove that for some $c> 0$ and $N>N_0$
$${N \choose k}{N/2 \choose \lfloor \gamma k \rfloor -1}  \left( \frac{{2 \lfloor \gamma k \rfloor - 2 \choose k}}{{N \choose k}} \right)^{d+p_\alpha} < e^{-cN}$$
holds for all admissible values of $k$.
Again we employ the Stirling estimates to upper-bound the left-hand side by $e^{c_1N}$, where
$$c_1 < H(\alpha) + \frac12 H(2\gamma \alpha) +(d+p_\alpha)\left(2\gamma \alpha H\left(\frac{1}{2\gamma} \right) - H(\alpha)\right) + \frac{\varepsilon}{2} < -\frac{\varepsilon}{2}$$
for $N > N_0$ with $N_0$ independent of $k$ (due to Lemma \ref{stirling}) and where we used the strengthened (\ref{eq:pairexpineq}) in the second estimate.

This concludes the proof.
\end{proof}

It is easy to see that the previous two propositions immediately imply Proposition \ref{pairexpansion}.

\section{Proof of Proposition \ref{expansion}}

The proof of Proposition \ref{expansion}, to which this section is devoted, goes along the same lines as the one in the previous section.

\begin{lemma}\label{expansionprob} Let $d \in \mathbb{N}$, $0 \leq \delta < 1$, $1 \leq k \leq N$, and $G = G(N,d,\delta)$ with $N$ left vertices $L$ and $N$ right vertices $R$. Then the probability that some $U \subset L$, $|U| = k$ fails to have at least $m$ ($1 \leq m \leq N$) neighboring pairs is at most
$${N \choose m -1}  \left( \frac{{m - 1 \choose k}}{{N \choose k}} \right)^d  \sum_{i=0}^{k} {\lfloor \delta N \rfloor \choose i}  {N - \lfloor \delta N \rfloor \choose k - i}  {m -1 \choose i} \bigg/ {N \choose i}$$
which in the case $\delta = 0$ reduces to
$${N \choose k} {N \choose m -1}  \left( \frac{{m - 1 \choose k}}{{N \choose k}} \right)^d.$$
\end{lemma}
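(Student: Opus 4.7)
The plan is to adapt the proof of Lemma \ref{pairprob} to the ordinary (non-pair) expansion setting, where the right vertices carry no group structure. The failure event for a fixed $U\subset L$ of size $k$ is the existence of a set $V\subset R$ of size $m-1$ with $\Gamma(U)\subseteq V$, so I would bound the total probability by a union bound first over $V$ and then over $U$.

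Fix $U$ with $|U|=k$ and treat $V\subset R$ with $|V|=m-1$ as sampled uniformly at random among the $\binom{N}{m-1}$ such subsets. The $d$ independent random permutations each send $U$ into $V$ with probability $\binom{m-1}{k}/\binom{N}{k}$ (this depends only on $|V|$, not on $V$), contributing a factor $\bigl(\binom{m-1}{k}/\binom{N}{k}\bigr)^{d}$ by independence. For the deterministic edges $(l_i,r_i)$ with $i\leq\lfloor\delta N\rfloor$, the inclusion $\Gamma(U)\subseteq V$ forces $r_i\in V$ for every $l_i\in U$; writing $j=|U\cap\{l_1,\dots,l_{\lfloor\delta N\rfloor}\}|$, the probability that the corresponding $j$ specified right vertices all lie in a uniformly random $V$ equals $\binom{N-j}{m-1-j}/\binom{N}{m-1}=\binom{m-1}{j}/\binom{N}{j}$. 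Multiplying these factors together with the number $\binom{N}{m-1}$ of $V$'s bounds the probability that the fixed $U$ fails.

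Finally, I union-bound over $U$ by grouping subsets according to $j$: for each $j$ there are $\binom{\lfloor\delta N\rfloor}{j}\binom{N-\lfloor\delta N\rfloor}{k-j}$ choices of $U$, yielding the displayed bound
$$\binom{N}{m-1}\left(\frac{\binom{m-1}{k}}{\binom{N}{k}}\right)^{\!d}\sum_{j=0}^{k}\binom{\lfloor\delta N\rfloor}{j}\binom{N-\lfloor\delta N\rfloor}{k-j}\frac{\binom{m-1}{j}}{\binom{N}{j}}.$$
When $\delta=0$ only the $j=0$ summand is nonzero (equal to $\binom{N}{k}$), recovering $\binom{N}{k}\binom{N}{m-1}\bigl(\binom{m-1}{k}/\binom{N}{k}\bigr)^{d}$. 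No serious obstacle arises because the argument closely parallels Lemma \ref{pairprob}; the only minor point is to use the identity $\binom{N-j}{m-1-j}/\binom{N}{m-1}=\binom{m-1}{j}/\binom{N}{j}$ to recast the joint averaging over $V$ and $U$ in the compact form stated in the lemma, and to observe that the pair-structure restriction $\delta<1/2$ needed in Lemma \ref{pairprob} is not required here, since no pair-disjointness is invoked.
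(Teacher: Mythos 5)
Your proposal is correct and follows essentially the same route as the paper's proof: fix $U$, union-bound over the $\binom{N}{m-1}$ choices of $V$ (equivalently average over a random $V$), use independence of the $d$ permutations for the factor $\bigl(\binom{m-1}{k}/\binom{N}{k}\bigr)^{d}$, handle the deterministic edges via the quantity $\binom{m-1}{j}/\binom{N}{j}$ depending only on $j=|U\cap\{l_1,\dots,l_{\lfloor\delta N\rfloor}\}|$, and then union-bound over $U$ grouped by $j$. Your explicit identity $\binom{N-j}{m-1-j}/\binom{N}{m-1}=\binom{m-1}{j}/\binom{N}{j}$ and the observation that $\delta<1/2$ is unnecessary here just make explicit what the paper leaves implicit.
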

\begin{proof} Let us first fix a set $U \subset L$ of size $k$ and compute the probability it fails in the expansion. That happens if and only if there exists $V \subset R$ formed by $m-1$ vertices such that the neighbours of $U$ lie entirely in $V$. Choose $V \subset R$ consisting of $m-1$ vertices randomly. For the $d$ complete permutations the probability is
$$
\left( \frac{{m-1 \choose k}}{{N \choose k}} \right)^d.
$$
Let the probability concerning the extra $\lfloor \delta N \rfloor$ edges be $p_U$.
From the union bound over subsets $V$ and also over subsets $U$ of size $k$, we upper bound the probability of failing in expansion as
$$
{N \choose m-1}  \left( \frac{{m-1 \choose k}}{{N \choose k}} \right)^d  \sum_U p_U.
$$
The sum can be upper-bounded using the union bound over the possible cardinalities of \linebreak $U \cap \{l_i\mid i = 1,\dots, \lfloor \delta N \rfloor \}$ as follows
$$
\sum_U p_U \leq \sum_{i=0}^{k} {\lfloor \delta N \rfloor \choose i}{N- \lfloor \delta N \rfloor \choose k-i} { m - 1 \choose i} \bigg/ {N \choose i}.
$$
This proves the first part of the claim and for the second one we may for example observe that $p_U = 1$ for any $U$ when $\delta=0$.
\end{proof}

Next, we will analyze three cases: (i) $\alpha$ is far from $0$ and $1$; (ii) $\alpha$ is close to 0; (iii) $\alpha$ is close to 1.
(In the previous section we needed to worry only about the first two). We will start with the first case.

\begin{prop}\label{expansionclosedint} Let $d \in \mathbb{N}$, $0 \leq \delta < 1$, $0< \epsilon_1 < \epsilon_2 < 1$, and let $e(\alpha)$ be a continuous function on $[\epsilon_1,\epsilon_2]$ for which $\alpha < e(\alpha) < 1$ for all $\alpha \in [\epsilon_1, \epsilon_2]$. Then the graph $G(N,d,\delta)$ is a $(N,\alpha,e(\alpha))$-expander for every $\alpha \in [\epsilon_1, \epsilon_2]$ with probability $1-o(1)$ if one of the two following conditions holds:
\begin{enumerate}[(i)]
\item $\delta = 0$ and
\begin{equation}
d > \frac{H(\alpha) + H(e(\alpha))}{H(\alpha) - H\left( \frac{\alpha}{e(\alpha)} \right) e(\alpha)}, \label{eq:expansionineq1}
\end{equation}
for each $\alpha \in [\epsilon_1, \epsilon_2]$
\item $\delta > 0$ and
\begin{equation} d+c_\alpha > \frac{H(\alpha) + H(e(\alpha))}{H(\alpha) - H\left( \frac{\alpha}{e(\alpha)} \right) e(\alpha)}, \label{eq:expansionineq2}
\end{equation}
for each $\alpha \in [\epsilon_1, \epsilon_2]$, where $c_\alpha$ is a number for which the following inequality holds:
\begin{align}
H(\alpha)(1-c_\alpha)+c_\alpha e(\alpha) H\left( \frac{\alpha}{e(\alpha)} \right) + H(y) >  \nonumber\\
\delta H\left(\frac{y}{\delta}\right) + (1-\delta)H\left(\frac{\alpha-y}{1-\delta} \right)+ e(\alpha) H\left( \frac{y}{ e(\alpha)} \right)\label{eq:expansionineq3}
\end{align}
for any $\alpha \in [\epsilon_1, \epsilon_2]$ and any $y \in [0,\alpha] \cap [\alpha+\delta-1,\delta]$.
\end{enumerate}
\end{prop}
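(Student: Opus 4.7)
The proof mirrors the structure of Proposition~\ref{pairexp} and its fractional-degree companion established earlier in Section~5. My starting point is Lemma~\ref{expansionprob} applied with $m = \lceil e(\alpha) N \rceil$, where $\alpha = k/N$; this bounds the probability that some $U \subset L$ with $|U|=k$ fails the $e(\alpha)$-expansion property. A union bound over $k \in \{\lfloor \epsilon_1 N \rfloor, \ldots, \lfloor \epsilon_2 N \rfloor\}$ yields only $O(N)$ terms, so it suffices to exhibit a constant $c>0$, independent of $k$, with each summand at most $e^{-cN}$ for sufficiently large $N$.

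For case (i) the bound from Lemma~\ref{expansionprob} reduces to $\binom{N}{k}\binom{N}{m-1}\bigl(\binom{m-1}{k}/\binom{N}{k}\bigr)^d$. Converting each binomial coefficient to an entropy expression via Lemma~\ref{stirling}(b), the logarithm of this bound is
$$N\bigl[H(\alpha) + H(e(\alpha)) + d\,e(\alpha)\,H\!\left(\tfrac{\alpha}{e(\alpha)}\right) - d\,H(\alpha)\bigr] + O(\log N).$$
Since $\alpha \in [\epsilon_1,\epsilon_2]$ and $e(\alpha)$ is bounded away from $0$ and $1$ by continuity together with $\alpha < e(\alpha) < 1$, Lemma~\ref{stirling}(b) applies uniformly. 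The bracket is strictly negative on $[\epsilon_1,\epsilon_2]$ by (\ref{eq:expansionineq1}), and compactness lets us replace strict negativity by a uniform upper bound $-2\varepsilon < 0$, producing the required exponential decay.

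For case (ii) I would use the factoring trick from the pair-expander proof. Set
$$L = \binom{N}{k}\left(\frac{\binom{m-1}{k}}{\binom{N}{k}}\right)^{c_\alpha},\qquad R = \sum_{i=0}^{k} R_i,\qquad R_i = \binom{\lfloor \delta N\rfloor}{i}\binom{N-\lfloor \delta N\rfloor}{k-i}\binom{m-1}{i}\bigg/\binom{N}{i},$$
so that the probability bound from Lemma~\ref{expansionprob} factors as $(R/L)\cdot\bigl[\binom{N}{k}\binom{N}{m-1}\bigl(\binom{m-1}{k}/\binom{N}{k}\bigr)^{d+c_\alpha}\bigr]$. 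Stirling estimates give $\tfrac{1}{N}\log(R_i/L)$ asymptotically equal to the RHS minus the LHS of (\ref{eq:expansionineq3}) with $y=i/N$; strengthening (\ref{eq:expansionineq3}) on the compact feasible region by a uniform $\varepsilon > 0$ and applying Lemma~\ref{stirling}(b) uniformly then yields $R_i/L \le e^{-\varepsilon N/2}$ for large $N$, independent of $k$ and $i$. Summing over the $k+1 = O(N)$ values of $i$ costs only a polynomial factor and is absorbed into the exponent. The remaining bracketed factor is then bounded exactly as in case (i), using (\ref{eq:expansionineq2}) in place of (\ref{eq:expansionineq1}).

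The main difficulty I anticipate is the uniformity bookkeeping. I must verify that the error in Lemma~\ref{stirling}(b) and the strengthening constant $\varepsilon$ can be chosen independently of $k$ and $i$. This requires checking that every argument of the entropy function appearing in the analysis, namely $\alpha$, $\alpha/e(\alpha)$, $y/\delta$, $(\alpha-y)/(1-\delta)$, and $y/e(\alpha)$, lies in a fixed compact subinterval of $(0,1)$, with boundary cases ($y=0$, $y=\alpha$, $y=\delta$, $y=\alpha+\delta-1$) handled separately via the convention $H(0)=H(1)=0$ and the direct part of Lemma~\ref{stirling}(a). Compactness of $[\epsilon_1,\epsilon_2]$ together with continuity of $e(\cdot)$ is precisely what makes these uniform estimates available, and this is why the proposition is stated on a closed subinterval bounded away from $0$ and $1$.
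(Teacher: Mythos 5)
Your argument is correct and follows essentially the same route as the paper's proof: a union bound over $k$ with entropy (Stirling) estimates and a compactness-based strengthening of the strict inequality for the $\delta=0$ case, and for $\delta>0$ the same factorization into $R_i/L$ terms controlled by the third inequality, leaving a residual term handled exactly as in the $\delta=0$ case with $d+c_\alpha$ in place of $d$. Your write-up even states the exponent with the correct sign (the paper's display has a sign typo) and retains the $\binom{N}{m-1}$ factor in the residual term, so no gaps remain beyond the uniformity bookkeeping you already identify.
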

\begin{proof} Let us begin with the first part and assume $\delta = 0$.

Then for sets of size $k$ where $\lfloor \epsilon_1 N \rfloor \leq k \leq \lfloor \epsilon_2 N \rfloor$ the probability of failing in expansion is by Lemma \ref{expansionprob} at most
$${N \choose k} {N \choose \lceil e(\alpha) N \rceil -1}  \left( \frac{{\lceil e(\alpha) N \rceil-1 \choose k}}{{N \choose k}} \right)^d$$
where $\alpha = k/N$,
and after using the union bound over values of $k$, the total probability of failing is at most
\begin{equation}
\sum_{k=\lfloor \epsilon_1 N \rfloor}^{\lfloor \epsilon_2 N \rfloor} {N \choose k} {N \choose \lceil e(\alpha) N \rceil -1}  \left( \frac{{\lceil e(\alpha) N \rceil-1 \choose k}}{{N \choose k}} \right)^d. \label{eq:expansionprobsum}
\end{equation}

We will show that there is $c > 0$ such that for sufficiently large $N$ ($N > N_0$) each summand is at most $e^{-cN}$. Since the number of summands is linear in $N$, the conclusion will follow.

First note that the inequality (\ref{eq:expansionineq1}) is strict and holds over a compact set so it can be strengthened by some $\varepsilon > 0$ (independent of $\alpha$).

Again we employ the Stirling estimates to upper-bound each term of (\ref{eq:expansionprobsum}) by $e^{c_1N}$, where
$$c_1 < H(\alpha) + H(e(\alpha)) +d\left( H(\alpha) - H\left( \frac{\alpha}{e(\alpha)} \right) e(\alpha)\right) + \frac{\varepsilon}{2} < -\frac{\varepsilon}{2}$$
for $N > N_0$ with $N_0$ independent of $k$ (due to Lemma \ref{stirling}) and where we used the strengthened (\ref{eq:expansionineq1}) in the second estimate.

This finishes the proof of the case $\delta = 0$.

\medskip

Now let $\delta > 0$. Using again Lemma \ref{expansionprob} and the union bound over $k$, we get that the total probability of failing in expansion is at most

$$\sum_{k=\lfloor \epsilon_1 N \rfloor}^{\lfloor \epsilon_2 N \rfloor} {N \choose \lceil e(\alpha) N \rceil -1}  \left( \frac{{\lceil e(\alpha) N \rceil - 1 \choose k}}{{N \choose k}} \right)^d  \sum_{i=0}^{k} R_i,$$
where
$$
R_i = {\lfloor \delta N \rfloor \choose i}  {N - \lfloor \delta N \rfloor \choose k - i}  {\lceil e(\alpha) N \rceil -1 \choose i} \bigg/ {N \choose i}.
$$
We will show that there is $c > 0$ such that for sufficiently large $N$ ($N > N_0$) each summand is at most $e^{-cN}$. Since the number of summands is linear in $N$, the conclusion will follow.

Note that (\ref{eq:expansionineq3}) is strict and holds over a compact set so it can be strengthened by some $\varepsilon > 0$ (independent of $\alpha$).

Let
\begin{align*}
L &=  {N \choose k} \left( \frac{{\lceil e(\alpha) N \rceil - 1 \choose k}}{{N \choose k}} \right)^{c_\alpha},\\
R &= \sum_{i=0}^{\lfloor \gamma k \rfloor -1} R_i.
\end{align*}
We claim that $R/L < e^{-c_1N}$ for some constant $c_1 > 0$ and $N > N_0$ where $c_1$ and $N_0$ are both independent of $k$. Again it suffices to prove that for some $c_2>0$ and $N > N_0$ (both independent of $k$) we have $R_i/L < e^{-c_2N}$ for all $i$.

To this end, we use the Stirling estimates to see that for $N > N_0$
\begin{align*}
\frac{1}{N} \log (R_i/L) <&\\
&\delta H\left(\frac{y}{\delta}\right) + (1-\delta)H\left(\frac{\alpha-y}{1-\delta} \right)+ e(\alpha) H\left( \frac{y}{ e(\alpha)} \right) \\
&- \left( H(\alpha)(1-c_\alpha)+c_\alpha e(\alpha) H\left( \frac{\alpha}{e(\alpha)} \right) + H(y) \right)   + \frac{\varepsilon}{2}
\end{align*}
where $\alpha = k/N$ and $y = i/N$. Moreover, by Lemma \ref{stirling} this $N_0$ does not depend on $k$ and $i$.
Using (\ref{eq:expansionineq2}) strengthened by $\varepsilon$, we finally obtain
that for $N> N_0$ we have
$$\frac{1}{N} \log (R_i/L) < -\frac{\varepsilon}{2}$$ for all $i$, where $N_0$ is independent of $k$. This proves the estimate.

Applying this estimate, we are left to prove that for some $c> 0$ and $N>N_0$
$${N \choose k} \left( \frac{{\lceil e(\alpha) N \rceil - 1 \choose k}}{{N \choose k}} \right)^{d+c_\alpha} < e^{-cN}$$
holds for all admissible values of $k$.
Here we may join the proof of the first part of this proposition with $\delta + c_\alpha$ playing the role of $\delta$.
\end{proof}

The next proposition analyzes the case when $\alpha$ is close to $0$.
\begin{prop}\label{closetozero} Let \footnote{This situation was treated already in \cite{Bas81} leading to a better sufficient condition $d>1+\beta$.
However, in the proof an incorrect estimate $n\binom{n}{k} \leq k\left(\frac{en}{k}\right)^k$ was used (see their inequality (b) at the bottom of page 83).
Here, we derive a weaker version which is still applicable in our case.} $d \in \mathbb{N}$ and $\beta > 1$. Then there exists $\varepsilon > 0$ such that the graph $G(N,d,0)$ is a $(N,\alpha,\alpha\beta)$-expander for every $\alpha \in [0, \varepsilon]$ with probability $1-o(1)$ if 
$$d > 2+ \beta.$$
\end{prop}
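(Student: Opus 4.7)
My plan is to apply Lemma~\ref{expansionprob} with $\delta=0$ and the worst-case expansion size $m \leq \lceil(k+1)\beta\rceil$ compatible with $|S| = k = \lfloor\alpha N\rfloor$, and then take a union bound over $k \in \{1, \dots, \lfloor\varepsilon N\rfloor\}$. For each fixed $k$, the set of $\alpha$ giving $\lfloor\alpha N\rfloor = k$ is the interval $[k/N,(k+1)/N)$, so the largest required expansion is $m-1 \leq (k+1)\beta$, which is $\beta k$ up to an additive constant depending only on $\beta$. Since Lemma~\ref{stirling} only gives tight Stirling asymptotics when $k/N$ stays bounded away from $0$, the small-$k$ regime treated here cannot just invoke Proposition~\ref{expansionclosedint} and instead requires direct combinatorial estimates.

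The next step is to apply the standard inequalities $\binom{n}{k}\leq(en/k)^k$ to $\binom{N}{k}$ and $\binom{N}{m-1}$, together with $\binom{m-1}{k}/\binom{N}{k}\leq((m-1)/N)^k$. Treating $m-1$ as essentially $\beta k$ (the additive slop contributes only a multiplicative factor that is absorbed into a constant $C = C(\beta,d)$ for the asymptotic analysis), each summand is bounded by
\[
P_k \;\leq\; \bigl[C\,(k/N)^{d-1-\beta}\bigr]^k.
\]
Writing $a := d-1-\beta$, this gives $\log P_k \leq k\bigl(\log C - a\log(N/k)\bigr)$, a convex function of $k$ (second derivative $a/k > 0$) with unique minimum at $k^\ast = N e^{-1} C^{-1/a}$, a fixed constant fraction of $N$.

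The main step is to exploit this convexity. Choosing $\varepsilon > 0$ small enough that $\varepsilon < e^{-1}C^{-1/a}$ places the entire range $[1,\lfloor\varepsilon N\rfloor]$ to the left of $k^\ast$, so $P_k$ is decreasing in $k$ there and every summand satisfies $P_k \leq P_1 = O(N^{-a})$. Summing then yields
\[
\sum_{k=1}^{\lfloor\varepsilon N\rfloor} P_k \;\leq\; \varepsilon N \cdot O(N^{-a}) \;=\; O(N^{1-a}) \;=\; O(N^{-(d-2-\beta)}),
\]
which is $o(1)$ precisely when $d>2+\beta$. The main obstacle, and the reason we cannot match the sharper threshold $d>1+\beta$ attempted in \cite{Bas81}, is exactly this factor-of-$N$ loss: using the crude "number of summands times maximum summand" bound costs one power of $N$ beyond what the pointwise estimate provides, and removing it would require a more delicate summation that tracks the decay of $P_k$ as $k$ grows. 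Since $d>2+\beta$ already suffices for our application, we accept this loss rather than attempt to eliminate it.
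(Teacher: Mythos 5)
Your proposal is essentially the paper's own proof: the same application of Lemma~\ref{expansionprob} with $\delta=0$, the same union bound over $k\le\varepsilon N$, and the same standard estimates $\binom{n}{k}\le(en/k)^k$ and $\binom{m}{k}/\binom{n}{k}\le(m/n)^k$, leading to a summand bound of the form $\bigl[C(k/N)^{d-1-\beta}\bigr]^k$ and to the condition $d>2+\beta$ entering in exactly the same way; your convexity/monotonicity finish (maximum of the bound at $k=1$, then ``number of terms times largest term'') is just a repackaging of the paper's observation that $N$ times each summand is at most $k\bigl(C\varepsilon^{d-\beta-2}\bigr)^k$. One caveat: your parenthetical claim that the additive slop in $m-1\le(k+1)\beta$ is ``absorbed into a constant $C$'' is not correct as stated --- replacing $\beta k$ by $\beta k+s$ with constant $s$ costs a factor of order $(N/k)^{s}$, which is not bounded for $k=O(1)$ and, if carried through your counting, would degrade the threshold to $d>2+\beta+s$; the paper avoids this by requiring size-$k$ sets to expand only to $\lceil\beta k\rceil$ (i.e.\ expansion at $\alpha=k/N$), and under that convention your estimate coincides with the paper's and the argument is fine.
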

\begin{proof} The probability we want to upper-bound is by Lemma \ref{expansionprob} and after applying the union bound over acceptable values of $k$ at most
\begin{equation}
\sum_{k=1}^{\lfloor \varepsilon N \rfloor} {N \choose k} {N \choose \lceil k\beta \rceil -1}  \left( \frac{{\lceil k\beta \rceil-1 \choose k}}{{N \choose k}} \right)^d.
\end{equation}

We will prove that each term can be made (significantly) smaller than $1/N$.

Then standard estimates on binomial coefficients
$$\binom{n}{k} \leq \left(\frac{ne}{k}\right)^k, \qquad \binom{n}{k} \Big/ \binom{m}{k} \leq \left(\frac{n}{m}\right)^k \quad \text{for} \, n \leq m
$$
give
\begin{align*} N {N \choose k} &{N \choose \lceil \beta k \rceil -1}  \left( \frac{{\lceil \beta k\rceil-1 \choose k}}{{N \choose k}} \right)^d \\ &\leq  N \left( \frac{eN}{k} \right)^k  \left(\frac{eN}{\beta k} \right)^{\beta k} \left( \frac{\beta k}{N} \right)^{kd} \\
&\leq  k \left( \frac{eN}{k} \right)^k  \left(\frac{eN}{\beta k} \right)^{\beta k+k} \left( \frac{\beta k}{N} \right)^{kd} \\
&= k \left(\beta^{d-\beta-1} e^{\beta+2} \left(\frac{k}{N} \right)^{d-\beta-2} \right)^k \\
&\leq k(C \varepsilon^{d-\beta-2})^k,
\end{align*}
for some $C$ independent from $N$ and $k$.
By choosing suitable constant $\varepsilon > 0$ this can be made arbitrarily small for all $k \leq \varepsilon N$ if we make use of $d > 2+\beta$.
\end{proof}

Finally, we need to consider the case when $\alpha$ is close to $1$. We will need the following fact.
\begin{lemma}\label{binombound} If for $n,k,m \in \mathbb{N}$ holds $n+m > 2k >2m$, then
$${n \choose k-m} \bigg/ {n \choose k} \leq \left(\frac{k}{n-k+m} \right)^m.$$
\end{lemma}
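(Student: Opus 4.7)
The plan is to write the ratio $\binom{n}{k-m} / \binom{n}{k}$ explicitly as a product of $m$ elementary fractions, and then bound each factor separately by $k/(n-k+m)$.

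Concretely, expanding the binomial coefficients in terms of factorials gives
$$\frac{\binom{n}{k-m}}{\binom{n}{k}} = \frac{k!\,(n-k)!}{(k-m)!\,(n-k+m)!} = \prod_{j=0}^{m-1} \frac{k-j}{n-k+m-j}.$$
The hypothesis $2k > 2m$ guarantees $k > m \ge j$, so every numerator $k-j$ is a positive integer and in particular $k-m \ge 1$, which makes the binomial on the left nonzero. The hypothesis $n+m > 2k$ rewrites as $k < n-k+m$, so the denominators are also positive integers.

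The key observation is that for positive $a < b$ and any integer $j$ with $0 \le j < a$, one has $\frac{a-j}{b-j} \le \frac{a}{b}$; this follows from $(a-j)b - a(b-j) = j(a-b) \le 0$. Applying this with $a=k$, $b=n-k+m$ (valid since $k < n-k+m$ by assumption) for each $j \in \{0,1,\ldots,m-1\}$ bounds every factor of the product above by $k/(n-k+m)$. Multiplying the $m$ factors yields
$$\frac{\binom{n}{k-m}}{\binom{n}{k}} \le \left(\frac{k}{n-k+m}\right)^m,$$
which is the desired inequality.

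There is no real obstacle here; the only thing to be careful about is the interpretation of the two hypotheses. The condition $2k > 2m$ is needed purely so that $k-m$ is a strictly positive integer (so the left-hand side is a ratio of positive binomial coefficients and the product expansion has exactly $m$ well-defined positive factors), while $n+m > 2k$ is what actually drives the bound by ensuring $k < n-k+m$, i.e.\ that each factor $(k-j)/(n-k+m-j)$ can be monotonically pushed up to $k/(n-k+m)$.
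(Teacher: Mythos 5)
Your proof is correct and follows essentially the same route as the paper: both write $\binom{n}{k-m}/\binom{n}{k}$ as the product $\prod_{j=0}^{m-1}\frac{k-j}{n-k+m-j}$ (the paper obtains it by iterating $\binom{n}{k-1}/\binom{n}{k}=\frac{k}{n-k+1}$, you expand factorials directly) and then bound every factor by $\frac{k}{n-k+m}$ using $k<n-k+m$. Your explicit cross-multiplication argument just spells out the paper's remark that the first fraction dominates the others.
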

\begin{proof}
We recall that $${n \choose k-1} \bigg/ {n \choose k} = \frac{k}{n-k+1}$$ and use it inductively to get
\begin{equation} 
\begin{split}
{n \choose k-m} \bigg/ {n \choose k} & = \frac{k \dotsm (k-m+1)}{(n-k+1)\dotsm (n-k+m)}\\
& = \frac{k}{n-k+m} \dotsm \frac{k-m+1}{n-k+1}\\
& \leq \left( \frac{k}{n-k+m} \right)^m
\end{split}
\end{equation}
where in the last inequality we have used that $k < n-k+m$ and therefore the first fraction provides an upper bound for all others.
\end{proof}

\begin{prop}\label{closetoone} Let $d \in \mathbb{N}$ and $0<c<1$. Then there exists $\varepsilon > 0$ such that the graph $G(N,d,0)$ is a $(N,\alpha,1-c(1-\alpha))$-expander for every $\alpha \in [1-\varepsilon, 1]$ with probability $1-o(1)$ if 
$$d > 1+ \frac{2}{c}.$$
\end{prop}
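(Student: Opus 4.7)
The plan is to adapt the proof of Proposition~\ref{closetozero} to the regime where $\alpha$ is close to $1$, by working with the complements of $S$ and of its neighborhood. Writing $k = \lfloor \alpha N \rfloor$ and $m = \lceil (1-c(1-\alpha))N\rceil$, I introduce the small complementary sizes $s = N-k$ and $t = N-m+1 = \lfloor cs\rfloor + 1$. A failure of the expansion property for some $k$-subset $S$ means exactly that $S$ has at least $t$ non-neighbors in $R$, so both relevant indices are small (of order $\varepsilon N$ at most).

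I would apply Lemma~\ref{expansionprob} with $\delta=0$ and rewrite the resulting bound in complement form using $\binom{N}{k}=\binom{N}{s}$, $\binom{N}{m-1}=\binom{N}{t}$, and $\binom{m-1}{k}=\binom{N-t}{s-t}$. The ratio $\binom{N-t}{s-t}/\binom{N}{s}$ is the telescoping product $\prod_{i=0}^{t-1}(s-i)/(N-i) \leq (s/N)^t$ (Lemma~\ref{binombound} yields essentially the same bound). Combined with the standard estimate $\binom{N}{r}\leq (eN/r)^r$, the contribution of size-$k$ sets to the failure probability is at most
\[
\binom{N}{s}\binom{N}{t}\left(\frac{s}{N}\right)^{dt} \;\leq\; e^{\,s+t}\, t^{-t}\, s^{\,dt-s}\, N^{\,s+t-dt}.
\]

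The proof would then be completed by summing this bound over $s\in[1,\varepsilon N]$ (i.e., over $\alpha\in[1-\varepsilon,1]$) and verifying that each summand is $o(1/N)$, so that the union bound over $k$ is $o(1)$. The hypothesis $d>1+2/c$ is precisely what is needed: in the ``tiny-$s$'' regime where $t=1$ (covering all integers $s<1/c$), the summand is of order $N^{\,s+1-d}$ and one needs $d>s+2$ for every such $s$, with the binding constraint at $s$ just below $1/c$ giving $d>1+2/c$; in the remaining regime $s\geq \lceil 1/c\rceil$ where $t\sim cs$, the exponent of $N$ is bounded by $s(1+c-dc)+O(1)$, which is already strongly negative under the weaker bound $d>1+1/c$. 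The main bookkeeping obstacle is the discretization $t=\lfloor cs\rfloor+1$: a single uniform estimate has to cover both the constant-$t$ and the $t\sim cs$ regimes, which I would arrange (as in the proof of Proposition~\ref{closetozero}) by absorbing the union-bound factor of $N$ into the $(eN/t)^t$ bound on $\binom{N}{t}$ before extracting the resulting power of $s/N$ and choosing $\varepsilon$ small enough that its base lies below $1$.
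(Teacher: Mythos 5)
Your plan is essentially the paper's own proof: pass to the complementary sizes $s=N-k$, $t=\lfloor cs\rfloor+1$, invoke Lemma \ref{expansionprob} with $\delta=0$, bound $\binom{N-t}{s-t}\big/\binom{N}{s}\le (s/N)^t$ (your telescoping product is essentially the same estimate as Lemma \ref{binombound}), apply $\binom{N}{r}\le(eN/r)^r$, absorb the union-bound factor of $N$, and finish by extracting a power of $s/N$ and taking $\varepsilon$ small, exactly as in Proposition \ref{closetozero}. One correction to your accounting: the tiny-$s$ ($t=1$) terms only force roughly $d>2+1/c$ (weaker than $d>1+2/c$ when $c<1$), and in the $t\sim cs$ regime the sign of the $N$-exponent alone is not enough because of the super-exponential factor $s^{dt-s}$; the hypothesis $d>1+2/c$ is really what makes the extracted exponent $c(d-1)-2$ of $s/N$ positive once the union-bound factor has been absorbed into the $(eN/t)^t$ term --- which is precisely the single uniform estimate the paper runs and which your final sentence correctly describes, so the argument goes through.
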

\begin{proof} Expanding with a function $e(\alpha) = 1-c(1-\alpha)$ implies that sets of size $k$ will expand to size at least $N-\lfloor c(N-k) \rfloor.$

Then an upper bound on the probability of failing in expansion is given by Lemma \ref{expansionprob} and after applying the union bound over acceptable values of $k$ this is
\begin{equation}
\sum_{k=\lfloor (1-\varepsilon)N \rfloor}^{N-1} {N \choose k} {N \choose N-\lfloor c(N-k) \rfloor -1}  \left( \frac{{N-\lfloor c(N-k) \rfloor -1 \choose k}}{{N \choose k}} \right)^d.
\end{equation}

We will prove that each term can be made (significantly) smaller than $1/N$. To this end, let $k' = N-k$, assume that $k' < N/3$ and use standard estimates on binomial coefficients together with Lemma \ref{binombound} to get
\begin{align*}
N& {N \choose k} {N \choose N-\lfloor c(N-k) \rfloor -1}  \left( \frac{{N-\lfloor c(N-k) \rfloor -1 \choose k}}{{N \choose k}}\right)^d \\
&= {N \choose k'} {N \choose \lfloor ck' \rfloor +1}  \left( \frac{{N - \lfloor ck' \rfloor -1\choose k'-\lfloor ck'\rfloor -1}}{{N \choose k'}}\right)^d\\
&\leq N  \left( \frac{Ne}{k'} \right)^{k'}  \left(\frac{Ne}{ck'+1} \right)^{ck'+1}  \left( \frac{{N-\lceil ck' \rceil \choose k'-\lceil ck'\rceil}}{{N \choose k'}}\right)^d \\
&\leq \frac{ck'+1}{e}  \left( \frac{Ne}{k'} \right)^{k'}  \left(\frac{Ne}{ck'+1} \right)^{ck'+k'} \left( \frac{{N \choose k'-\lceil ck'\rceil}}{{N \choose k'}}\right)^d \\
&\leq \frac{ck'+1}{e}  \left( \frac{Ne}{k'} \right)^{k'}  \left(\frac{Ne}{ck'+1} \right)^{ck'+k'} \left( \frac{k'}{N-k'+\lfloor ck' \rfloor}\right)^{ck'd},
\end{align*}
where after writing 
$$\left(\frac{Ne}{ck'+1} \right)^{ck'+k'} = \left( \frac{N}{k'} \right)^{ck'+k'}  \left(\frac{e}{c+\dfrac{1}{k'}} \right)^{ck'+k'}$$
and
$$\left( \frac{k'}{N-k'+\lfloor ck' \rfloor}\right)^{ck'd} = \left( \frac{k'}{N} \right)^{ck'd} \left( \frac{1}{1-\dfrac{k'-\lceil ck' \rceil}{N}}\right)^{ck'd},$$
the entire left-hand side can be upper-bounded by
$$(C_1k'+C_2)\left( \left(\frac{k'}{N}\right)^{dc-c-2} C_3 \right)^{k'}$$
for some $C_1$, $C_2$, $C_3$ independent of $N$ and $k$. By choosing suitable constant $\varepsilon > 0$ this can be made arbitrarily small for all $k' \leq \varepsilon N$ if we make use of $d > 1+ \frac{2}{c}$.
\end{proof}

Now we have all it takes to prove Proposition \ref{expansion}. We fix $0<\epsilon_1 < \epsilon_2 < 1$ and find $\xi > 0$ such that $e(\alpha)$ is linear on $[0,\xi]$ and $[1-\xi,1]$. Using Propositions \ref{closetozero} and \ref{closetoone}, we find $\varepsilon > 0$ such that $\xi >\varepsilon$, $\epsilon_1 > \varepsilon$, $1-\epsilon_2 > \varepsilon$, and such that the expansion is guaranteed for $\alpha \in [0,\varepsilon] \cup [1-\varepsilon,1]$. For the expansion on the intervals $[\varepsilon, \epsilon_1]$ and $[\epsilon_2,1-\varepsilon]$ we employ Proposition \ref{expansionclosedint} with $\delta = 0$ and for the remaining interval $[\epsilon_1,\epsilon_2]$ we also employ Proposition \ref{expansionclosedint} but this time the version for $\delta > 0$. This guarantees the desired expansion for all $\alpha \in [0,1]$.
\hfill\qedsymbol

\nocite{*}
\bibliographystyle{plain}
\bibliography{references}

\newpage

\appendix

\section{}

In this section we consider the following problem.
Let $E_n$ be a bipartite graph with $n$ left vertices $L$ and $n$ right vertices $R$
of an integer degree $d$ obtained as a union of $d$ random permutation graphs.
Fix positive integers $\ell,r\le n$ and subset $U\subseteq L$ of size $\ell$.
We are interested in the probability $p_{\ell r}$
that $U$ has a neighborhood $\Gamma(U)$ of size at most $r$.
The probability that $E_n$ is not an $(n,\ell/n,(r+1)/n)$-expander can then be upper-bounded by $\binom{n}{\ell}\cdot p_{\ell r}$.

For a fixed set $X\subseteq R$ of size $k\le n$ let $p_k$ be the probability that $\Gamma(U)\subseteq X$.
This probability can be easily  computed as
$$p_k = \frac{\binom{n-k}{\ell}}{\binom{n}{\ell}}.$$
Bassalygo~\cite{Bas81} used the following upper bound on $p_{\ell r}$: 
\begin{equation}\label{eq:BassalygoBound}
p_{\ell r}\le \binom{n}r p_r.
\end{equation}
The main result of this section is the following exact expression for $p_{\ell r}$.
\begin{theorem}\label{th:plr}
There holds
\begin{equation}
p_{\ell r}=\sum_{k=0}^m\alpha_k p_k
\end{equation}
where
\begin{equation}\label{eq:p:union:b}
\alpha_k=(-1)^{r-k}\binom{n}{k}\binom{n-k-1}{r-k}.
\end{equation}
\end{theorem}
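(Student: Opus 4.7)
The plan is to use binomial inversion combined with the permutation symmetry of the random graph model. First, since $E_n$ is a union of $d$ uniformly random permutations, the law of the edge set is invariant under relabeling the right vertices, so for every $Y\subseteq R$ the quantity $f_j := \mathbb{P}[\Gamma(U)=Y]$ depends only on $j=|Y|$. Partitioning outcomes according to the exact value of $\Gamma(U)$ then gives the two identities
\begin{equation*}
p_{\ell r} = \sum_{j=0}^{r}\binom{n}{j} f_j \qquad\text{and}\qquad p_k = \sum_{j=0}^{k}\binom{k}{j} f_j,
\end{equation*}
the first by grouping the outcomes according to $|\Gamma(U)|$, the second by fixing any $X\subseteq R$ with $|X|=k$ and grouping the outcomes of $\{\Gamma(U)\subseteq X\}$ according to which subset $Y\subseteq X$ equals $\Gamma(U)$.

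Second, the relation $p_k = \sum_{j\le k}\binom{k}{j} f_j$ is an upper-triangular system that is inverted by the standard binomial inversion formula, yielding $f_j = \sum_{k=0}^{j}(-1)^{j-k}\binom{j}{k} p_k$. I would then substitute this into the expression for $p_{\ell r}$ and interchange the order of summation; the coefficient of each $p_k$ then becomes $\sum_{j=k}^{r}(-1)^{j-k}\binom{n}{j}\binom{j}{k}$.

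Third, simplifying this coefficient is a short algebraic exercise. I would apply the identity $\binom{n}{j}\binom{j}{k} = \binom{n}{k}\binom{n-k}{j-k}$ and the substitution $i = j-k$ to rewrite it as $\binom{n}{k}\sum_{i=0}^{r-k}(-1)^i\binom{n-k}{i}$, and then invoke the classical partial alternating sum identity $\sum_{i=0}^{s}(-1)^i\binom{m}{i} = (-1)^s\binom{m-1}{s}$, which collapses the expression to $(-1)^{r-k}\binom{n}{k}\binom{n-k-1}{r-k}=\alpha_k$, as desired.

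There is no genuine obstacle in the argument: once the symmetry observation identifies $f_j$ as well defined, every subsequent manipulation is a standard combinatorial identity. The only point worth double-checking is the invariance of the distribution of $\Gamma(U)$ under permutations of $R$, which is immediate because each of the $d$ uniform random permutations is invariant under left multiplication by any element of $S_n$, and this invariance is preserved under taking the union of the corresponding edge sets.
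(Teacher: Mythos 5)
Your proposal is correct, and it reaches the identity by a genuinely different route than the paper. You partition according to the exact value of $\Gamma(U)$, use the permutation symmetry of the model to see that $f_j=\mathbb P[\Gamma(U)=Y]$ depends only on $j=|Y|$, and then solve the triangular system $p_k=\sum_{j\le k}\binom{k}{j}f_j$ by binomial (M\"obius) inversion before substituting into $p_{\ell r}=\sum_{j\le r}\binom{n}{j}f_j$; all steps, including the final collapse of $\sum_{j=k}^{r}(-1)^{j-k}\binom{n}{j}\binom{j}{k}$ via $\binom{n}{j}\binom{j}{k}=\binom{n}{k}\binom{n-k}{j-k}$ and the partial alternating sum $\sum_{i=0}^{s}(-1)^i\binom{m}{i}=(-1)^s\binom{m-1}{s}$, are sound (the last simplification coincides with the paper's Pascal-rule step). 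The paper instead proves an abstract lemma: for any family of events $[X]$, $X\in\calS_r$, that is intersection-closed ($\bigwedge_{X\in\calT}[X]=[\bigcap_{X\in\calT}X]$) and has cardinality-dependent probabilities, inclusion–exclusion shows the union probability is $\sum_k\alpha_k p_k$ with coefficients depending only on $n,r$, and it then determines these universal coefficients not by direct counting but by evaluating the identity on a convenient example (independent Bernoulli variables, where $p_i=q^{n-i}$) and matching coefficients of the resulting polynomial in $q$. Your inversion argument is more direct and elementary for this specific setting, producing the coefficients in one closed sum without the coefficient-extraction trick; the paper's lemma buys a bit more generality, as it applies to any event family satisfying the two preconditions rather than only to events of the form $\{\Gamma(U)\subseteq X\}$ for a random set $\Gamma(U)$. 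Minor point: the exponent in the theorem's sum, $\sum_{k=0}^{m}$, should read $\sum_{k=0}^{r}$, which is what both your argument and the paper actually establish.
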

Our numerical experiments suggest that the estimate \eqref{eq:BassalygoBound}
is very close to the true value of $p_{\ell r}$; the exact value (or rather its version
for the fractional degree) would allow to decrease the
density of a superconcentrator but by a very small amount. Therefore, in the main
part of the paper we used the estimate \eqref{eq:BassalygoBound} for simplicity (or more precisely
its version for the fractional degree).
Theorem \ref{th:plr} is given only as a side result.

To prove this theorem, we will consider a more general problem.
Let $\calS_r=\{X\subseteq R\:|\:|X|\le r\}$. To each $X\in\calS_r$
we will associate an event which will be denoted as $[X]$.
As an example, $[X]$ could be the event that subset $U$ expands entirely into $X$, i.e.\ $\Gamma(U)\subseteq X$.
Theorem \ref{th:plr} will follow from the result below.

\begin{lemma}
Suppose that events $\{[X]\:|\:X\in\calS_r\}$ satisfy the following for some vector ${\bf p}=(p_0,p_1,\ldots,p_r)$:
\begin{subequations}\label{eq:preconditions}
\begin{IEEEeqnarray}{rCll}
\bigwedge_{X\in \calT} [X] &=& [{\bigcap\limits_{X\in\calT}X}] \qquad\quad & \forall \calT\subseteq\calS_r \\
{\mathbb P}([X])&=&p_{|X|} & \forall X\in \calS_r
\end{IEEEeqnarray}
\end{subequations}
Then
\begin{equation}\label{eq:p:union:a}
\mathbb P(\bigvee_{X\in \calS_r} {[X]})=\sum_{k=0}^r\alpha_k p_k
\end{equation}
where coefficients $\alpha_k$ are given by \eqref{eq:p:union:b}.
\end{lemma}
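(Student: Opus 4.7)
The plan is to apply inclusion--exclusion to the union of events $\{[X]:X\in\calS_r\}$ and then re-parameterize the resulting sum by the value of the intersection. By~\eqref{eq:preconditions}, for any non-empty $\calT\subseteq\calS_r$ one has $\bigcap_{X\in\calT}[X]=[\bigcap_{X\in\calT}X]$ and hence $\mathbb P(\bigcap_{X\in\calT}[X])=p_{|\bigcap_{X\in\calT}X|}$ (noting $|\bigcap_{X\in\calT}X|\le r$ automatically). The usual inclusion--exclusion identity therefore gives
\[
\mathbb P\Bigl(\bigvee_{X\in\calS_r}[X]\Bigr)=\sum_{\emptyset\neq\calT\subseteq\calS_r}(-1)^{|\calT|+1}p_{|\bigcap_{X\in\calT}X|}=\sum_{Y\subseteq R,\,|Y|\le r}f(Y)\,p_{|Y|},
\]
where $f(Y)$ is the signed count of non-empty subcollections $\calT\subseteq\calS_r$ for which $\bigcap_{X\in\calT}X=Y$. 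By symmetry, $f(Y)$ depends only on $k:=|Y|$; write $f_k:=f(Y)$.

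To evaluate $f_k$ I would use Möbius inversion on the subset lattice of $R$. Relaxing the equality $\bigcap X=Y$ to containment $Y\subseteq\bigcap X$ is equivalent to $\calT\subseteq\{X\in\calS_r:X\supseteq Y\}$, a non-empty family (it contains $Y$ itself) of some size $M\ge 1$. The elementary alternating sum $\sum_{\emptyset\neq\calT\subseteq(\cdot)}(-1)^{|\calT|+1}=1-(1-1)^M=1$ therefore yields the cumulative quantity
\[
g(Y):=\sum_{Z\supseteq Y,\,|Z|\le r}f(Z)=1\qquad\text{for every $Y$ with $|Y|\le r$.}
\]
Möbius inversion on the subset lattice then gives
\[
f_k=\sum_{Z\supseteq Y,\,|Z|\le r}(-1)^{|Z|-|Y|}g(Z)=\sum_{j=0}^{r-k}(-1)^j\binom{n-k}{j}=(-1)^{r-k}\binom{n-k-1}{r-k},
\]
where the last step is the standard partial alternating binomial sum identity (a short telescoping argument from Pascal's rule).

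Summing the contributions from each cardinality class, which contains $\binom{n}{k}$ subsets $Y$ of $R$, produces
\[
\mathbb P\Bigl(\bigvee_{X\in\calS_r}[X]\Bigr)=\sum_{k=0}^r\binom{n}{k}(-1)^{r-k}\binom{n-k-1}{r-k}\,p_k=\sum_{k=0}^r\alpha_k p_k,
\]
as required. No individual step is deep; the main care points are keeping signs straight in Möbius inversion on the subset lattice, and verifying that one may harmlessly restrict the support of $f$ to $|Z|\le r$, which is automatic because every $X\in\calT$ has $|X|\le r$. As a sanity check, $k=r$ gives $f_r=1$ and $\alpha_r=\binom{n}{r}$, matching the leading term of Bassalygo's bound~\eqref{eq:BassalygoBound}.
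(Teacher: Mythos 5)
Your proof is correct, but it reaches the coefficients $\alpha_k$ by a genuinely different route than the paper. Both arguments start from the same inclusion--exclusion expansion over non-empty $\calT\subseteq\calS_r$ and use precondition~\eqref{eq:preconditions} to reduce each term to $p_{|\bigcap_{X\in\calT}X|}$. From there you compute the signed multiplicity $f(Y)$ of each possible intersection $Y$ directly: you observe that the cumulative quantity $g(Y)$ (the signed count of families whose intersection \emph{contains} $Y$) equals $1$ whenever $|Y|\le r$, because the relevant family $\{X\in\calS_r: X\supseteq Y\}$ is non-empty and $1-(1-1)^M=1$, and then you invert on the Boolean lattice and finish with the partial alternating sum $\sum_{j=0}^{r-k}(-1)^j\binom{n-k}{j}=(-1)^{r-k}\binom{n-k-1}{r-k}$. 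The paper instead only notes that the inclusion--exclusion coefficients $\alpha_k$ depend on $n,r$ but not on ${\bf p}$, and then pins them down indirectly by evaluating both sides on a concrete instance (independent Boolean variables $Z_1,\dots,Z_n$ with $[X]=\{Z_i=0\ \forall i\in R\setminus X\}$, so $p_i=q^{n-i}$), obtaining a polynomial identity in $q$ and matching coefficients; the same Pascal-rule telescoping appears at the end. Your approach buys a self-contained combinatorial computation that avoids having to construct and verify an auxiliary probabilistic model satisfying~\eqref{eq:preconditions}; the paper's trick buys freedom from the M\"obius-inversion bookkeeping. One small point worth making explicit in your write-up: in the inversion step it is $g$, not $f$, whose sum you truncate to $|Z|\le r$, so you should note that $g(Z)=0$ for $|Z|>r$ (no member of $\calS_r$ contains such a $Z$, so $M=0$ and the signed count is $1-(1-0)=0$), which makes the truncation harmless; with that remark the argument is complete.
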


\begin{proof}
By the inclusion-exclusion principle
\begin{align*}
\mathbb P(\bigvee_{X\in \calS_r} {[X]})
&=\sum_{\varnothing\ne\calT\subseteq \calS_r}(-1)^{|\calT|+1}\mathbb P(\bigwedge_{X\in \calT}[X]) \\
&=\sum_{\varnothing\ne\calT\subseteq \calS_r}(-1)^{|\calT|+1}\mathbb P([\bigcap_{X\in \calT}X])
=\sum_{k=0}^r \alpha_k p_k
\end{align*}
where $\alpha_k$ are some constants that depend on $n$ and $r$ (but not on ${\bf p}$).

To compute these constants, we will consider the following example.
Assume $R=\{1,\ldots,n\}$ and consider $n$ Boolean independent variables $Z_1,\ldots,Z_n$ with $\mathbb P(Z_i=0)=q$.
Let $[X]$ be the event that $Z_i=0$ for all $i\in R\setminus X$. Then conditions~\eqref{eq:preconditions} hold
for vector ${\bf p}$ with  $p_i=q^{n-i}$. We also have
\begin{eqnarray*}
\mathbb P(\bigvee_{X\in \calS_r} {[X]})
=\mathbb P(\sum_{i=1}^n Z_i\le r)
&=&\sum_{i=0}^r\binom{n}{i}(1-q)^i q^{n-i} \\
&=&\sum_{i=0}^r\binom{n}{i}\sum_{k=0}^i \binom{i}{k}(-q)^{i-k} q^{n-i} \\
&=&\sum_{i=0}^r\binom{n}{i}\sum_{k=0}^i \binom{i}{k}(-1)^{i-k} q^{n-k} \\
&=&\sum_{k=0}^r \left[ \sum_{i=k}^r(-1)^{i-k}\binom{n}{i} \binom{i}{k} \right] q^{n-k}
\end{eqnarray*}
which must equal $\sum_{k=0}^r\alpha_k p_k=\sum_{k=0}^r\alpha_k q^{n-k}$ for all $q\in[0,1]$. This implies that
\begin{align*}
\alpha_k&=\sum_{i=k}^r(-1)^{i-k}\binom{n}{i} \binom{i}{k}
=\sum_{j=0}^{r-k}(-1)^j\binom{n}{k+j} \binom{k+j}{k} \\
&=\binom{n}{k} \sum_{j=0}^{r-k} (-1)^j \binom{n-k}{j}
\end{align*}
and the sum on the right-hand side can be simplified using the Pascal's rule as
\begin{align*}
\sum_{j=0}^{r-k} (-1)^j \binom{n-k}{j} &= \sum_{j=0}^{r-k} (-1)^j \left( \binom{n-k-1}{j-1}+ \binom{n-k-1}{j} \right)\\ &= (-1)^{r-k}\binom{n-k-1}{r-k}
\end{align*}
where we set $\binom{n-k-1}{-1} = 0$. This establishes~\eqref{eq:p:union:b}.
\end{proof}

\section{}

Here we present computer-aided proofs for two inequalities needed for the proof of Theorem \ref{expexist}. For both of them we use a similar technique of subdividing into many small sub-domains and verifying a slightly stronger but linear inequality on each of them. The proof for inequality~\eqref{eq:pairexp3} demonstrates this technique more clearly so we chose to give it first.

\begin{lemma}\label{comppairexpansion} For $\delta = 0.325$, $\gamma = 1$, and $p_\alpha = 0.45$ the following inequality
\begin{align}
H(x)(1-p_{\alpha})+2\gamma p_\alpha x H\left(\frac{1}{2\gamma}\right) +H(y)> \nonumber \\ \delta H\left(\frac{y}{\delta} \right) + (1-\delta)H\left(\frac{\alpha-y}{1-\delta}\right) + \gamma x H\left(\frac{y}{\gamma x} \right)
\end{align}
holds for any $x \in [0.3,0.3322]$, $y \in [0,\gamma x] \cap [x+\delta-1,\delta]$
\end{lemma}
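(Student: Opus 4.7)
The plan is to verify the inequality by subdividing the two-dimensional domain into small rectangles and checking a stronger constant inequality on each, which is the technique advertised in the preamble to the lemma. First, substitute the constants: with $\gamma=1$ we have $2\gamma p_\alpha = 0.9$ and $H(1/2)=\log 2$; and because $x\le 0.3322 < 1-\delta = 0.675$, the lower bound $x+\delta-1$ for $y$ is negative, so the feasibility region is simply $\{(x,y): x \in [0.3,0.3322],\ y\in[0,x]\}$ (assuming, as context demands, that $\alpha$ on the right-hand side coincides with the variable $x$). Introduce the slack function
$$
D(x,y) := 0.55\,H(x) + 0.9\,(\log 2)\,x + H(y) - 0.325\,H\!\left(\tfrac{y}{0.325}\right) - 0.675\,H\!\left(\tfrac{x-y}{0.675}\right) - x\,H\!\left(\tfrac{y}{x}\right),
$$
so the claim reduces to $D(x,y) > 0$ on this compact region.

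Partition $[0.3,0.3322]\times[0,0.3322]$ into a fine rectangular grid, discarding cells that miss $\{y\le x\}$. On a cell $R=[x_1,x_2]\times[y_1,y_2]$, construct a lower bound for $D$ by interval enclosure of each term: for every summand of the form $c\,H(u)$ with $u$ an affine function of $(x,y)$, first compute the range $[u_-,u_+]$ of $u$ over $R$; then use concavity of $H$ together with the location of $1/2$ relative to $[u_-,u_+]$ to bound $H(u)$ below by $\min(H(u_-),H(u_+))$ and above by $H(1/2)$ if $1/2\in[u_-,u_+]$ and by $\max(H(u_-),H(u_+))$ otherwise. For the nonaffine term $x\,H(y/x)$, enclose $y/x\in[y_1/x_2,\,y_2/x_1]$, enclose $H(y/x)$ as above, and multiply by $[x_1,x_2]$ using interval arithmetic (the product is nonnegative throughout). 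Summing the per-term enclosures (with signs) gives a rigorous constant lower bound $D_{\min}(R)$, and $R$ is certified whenever $D_{\min}(R) > 0$.

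The verification is then executed by running this test on every cell and adaptively halving any cell that fails. Since $D$ is continuous and, by the strict form of (\ref{eq:pairexpineq2}) established in Proposition \ref{pairexpansion} read at these constants, bounded below by a positive minimum on the compact region, the enclosure bounds converge to the true value of $D$ as the cell diameter tends to $0$, so the adaptive refinement terminates in finitely many steps. In the appendix one then records the resulting partition (or the source code that produces it) and the certified lower bounds.

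The main obstacle I anticipate is loss of tightness of the interval enclosures near the boundary of the region, specifically near $x = 0.3322$ where $D$ is presumably small, and along the edges $y = 0$ and $y = x$ where the nonaffine term $x\,H(y/x)$ involves $H$ approaching its nonsmooth endpoints; there the naive product enclosure for $x\,H(y/x)$ can be excessively conservative because the derivative of $H$ diverges at $0$ and $1$. This can be addressed either by adaptively using much smaller cells in thin strips $\{y\le \eta\}$ and $\{x-y\le \eta\}$, or by handling those strips analytically using the explicit asymptotic $H(u) = u\log(1/u) + (1-u)\log(1/(1-u))$ and bounding $x\,H(y/x)$ directly in terms of $y$ and $x-y$, leaving only the bulk interior to the box-enclosure routine.
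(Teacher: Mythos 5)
Your overall strategy is the same as the paper's: substitute $\gamma=1$, observe that the bound $y\ge x+\delta-1$ is vacuous here, subdivide the $(x,y)$-domain into small rectangles, use concavity of $H$ to replace each term by a per-cell bound of the correct sign, and let a program certify a strictly stronger inequality on each cell. The paper's per-cell bounds are linear (secant lines from below for the left-hand terms, midpoint tangents from above for the right-hand terms), so each cell reduces to checking four corner points with an explicit $0.0001$ margin on a fixed $1000\times 1000$ grid; your constant interval enclosures (endpoint minimum from below, endpoint maximum or $H(1/2)$ from above, interval product for $x\,H(y/x)$) are coarser but equally legitimate, and your adaptive refinement compensates for the loss of tightness.

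Two points need fixing. First, the feasible region is not $\{y\in[0,x]\}$: with $\gamma=1$ the constraint is $y\in[0,\min(x,\delta)]$, and since $\delta=0.325<0.3322$ the cap $y\le\delta$ is active for $x\in(0.325,0.3322]$. On the sliver $\delta<y\le x$ the term $\delta H\left(\frac{y}{\delta}\right)$ has argument greater than $1$ and is undefined, so the procedure as you describe it would attempt (and fail) to certify cells that are not part of the claim; the paper avoids this by taking the upper endpoint of the $y$-range to be $\min(x^i_{max},\delta)$, and you should intersect your grid with $y\le\delta$ as well. Second, your termination argument is circular: Proposition \ref{pairexpansion} does not establish inequality (\ref{eq:pairexp3}) at these constants --- it assumes it as a hypothesis --- so you cannot invoke it to guarantee that your slack function $D$ has a positive minimum. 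This does not undermine the soundness of a successful run (the certificate is self-contained once all cells are verified), but as a proof you must either report that the computation actually terminated with every cell certified, as the paper does with its fixed grid and explicit margin, or give an independent argument for positivity.
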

\begin{proof}
We plug in the convenient value $\gamma = 1$.

Divide the interval $[0.3,0.3322]$ evenly in 1000 sub-intervals $X_1$, $\dots$, $X_{1000}$. For each $X_i = [x^i_{min},x^i_{max}]$ we compute the maximal possible $y$ as $\min(x^i_{max},\delta)$ and divide the interval $$[0, \min(x^i_{max},\delta)]$$ (note that the bound $y \geq x+\delta-1$ is ineffective for considered $x$ and $\delta$) evenly to 1000 sub-intervals $Y_1, \dots, Y_{1000}$. Then for each $Y_j = [y^j_{min},y^j_{max}]$ we compute tight bounds for the expressions
$$V_1 = x, \quad V_2 = y, \quad V_3 = \frac{y}{\delta}, \quad  V_4=\frac{x-y}{1-\delta}, \quad V_5 = \frac{y}{x}$$ that appear as parameters of the function $H$, respectively as
\begin{align*}
I_1 &= [x^i_{min},x^i_{max}], \qquad I_2 = [y^j_{min},y^j_{max}], \qquad &I_3 = \left[\frac{y^j_{min}}{\delta},\frac{y^j_{max}}{\delta}\right], \\
 I_4 &= \left[\frac{x^i_{min}-y^j_{max}}{1-\delta},\frac{x^i_{max}-y^j_{min}}{1-\delta} \right], \qquad &I_5 = \left[\frac{y^j_{min}}{x^i_{max}},\frac{y^j_{max}}{x^i_{min}} \right]
\end{align*}

possibly truncated to $[0,1]$.

For $i=1,2$ we approximate $H(x)$ on $I_i = [p_i,q_i]$ from below with a linear function $H_i$ connecting the points $[p_i,H(p_i)]$ and $[q_i,H(q_i)]$. As $H(x)$ is concave, we indeed have $H(x) \geq H_i(x)$ on $I_i$.

For $i = 3,4,5$ we approximate $H(x)$ on $I_i = [p_i,q_i]$ from above with a linear function $H^i$ that is a tangent to the graph of $H(x)$ at the point $(p_i+q_i)/2$. Due to concavity of $H(x)$, we indeed have $H(x) \leq H^i(x)$ on (not only) $I_i$.

The stronger inequality 
\begin{align}
&H_1(x)(1-p_{\alpha})+2p_\alpha x H\left(\frac12\right) +H_2(y)\nonumber \\
&> \delta H^3\left(\frac{y}{\delta} \right) + (1-\delta)H^4\left(\frac{x-y}{1-\delta}\right) + x H^5\left(\frac{y}{x} \right) \label{eq_comppair2}
\end{align}
is linear in both $x$ and $y$ and thus can be checked only at extreme points of the domain $D \subset X_i \times Y_j$.

These are (in the form $(x,y)$)
\begin{eqnarray*}
\left(\max(x^i_{min},y^j_{min}),y^j_{min}\right),\qquad (x^i_{max},y^j_{min}),\\
\left(\max(x^i_{min},y^j_{max}),y^j_{max}\right),\qquad (x^i_{max},y^j_{max}).
\end{eqnarray*}

Checking these values proves the inequality on $D$ and applying the same procedure for all $i,j \in \{1, \dots, 1000\}$ leads to the full proof. A computer program checking for each of the $4 \cdot 10^6$ values that the left-hand side of (\ref{eq_comppair2}) is greater than the right-hand side by at least 0.0001 has been made available \cite{1}.

\end{proof}

\begin{lemma}For $\delta = 0.325$ and $\Delta = 0.18$ the following inequality

\begin{align}
H(x)(1-\Delta)+\Delta e(x) H\left( \frac{x}{e(x)} \right) + H(y) >  \nonumber\\
\delta H\left(\frac{y}{\delta}\right) + (1-\delta)H\left(\frac{x-y}{1-\delta} \right)+ e(x) H\left( \frac{y}{ e(x)} \right)
\end{align}
holds for any $x \in [0.21,0.48]$, $y \in [0,x] \cap [x+\delta-1,\delta]$,
with $e(x)$ given by the constants
\begin{align*}
C_1 &= 0.2301, \quad &C_3 = 0.3322, \quad C_2 = C_5 = 1-C_1-C_3, \\ C_4 &= 1-C_2, \quad &C_6 = 1-C_3.
\end{align*}
\end{lemma}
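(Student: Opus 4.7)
The plan is to adapt the subdivision-and-corner-check technique from Lemma~\ref{comppairexpansion}. The new wrinkle is that $e(x)$ now appears both as a multiplicative coefficient and inside an argument of $H$, so the left-hand side minus the right-hand side is not manifestly linear in $(x,y)$ once $H$ is replaced by an affine surrogate. An extra step is therefore needed to freeze $e(x)$ to a constant on each sub-rectangle before bounding $H$.

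First, I would split $[0.21,0.48]$ at the three breakpoints $C_1, C_3, C_5$ of $e(\cdot)$ that lie in this interval, so that on each resulting piece $e(x)$ is genuinely affine with positive slope (the slopes of $e$ are positive by \eqref{numeq4}, so $e$ is in particular monotone). Each piece is uniformly subdivided into, say, $1000$ intervals $X_i=[x_-,x_+]$, and for each $X_i$ the admissible $y$-interval $[\max(0,x_-+\delta-1),\min(x_+,\delta)]$ is likewise subdivided into intervals $Y_j=[y_-,y_+]$.

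The key new observation is the identity
\[
e\,H(t/e)\;=\;e\log e-t\log t-(e-t)\log(e-t),
\]
whose derivative with respect to $e$ equals $\log\bigl(e/(e-t)\bigr)>0$ for $0<t<e$. Hence $e\mapsto e\,H(t/e)$ is strictly increasing, and writing $e_-=e(x_-)$, $e_+=e(x_+)$ (which bracket $e(x)$ on $X_i$ by monotonicity of $e$), I bound
\[
\Delta\,e(x)\,H\!\Bigl(\tfrac{x}{e(x)}\Bigr)\;\ge\;\Delta\,e_-\,H\!\Bigl(\tfrac{x}{e_-}\Bigr),\qquad e(x)\,H\!\Bigl(\tfrac{y}{e(x)}\Bigr)\;\le\;e_+\,H\!\Bigl(\tfrac{y}{e_+}\Bigr).
\]
Both surrogates depend on the variables only through a single $H(\cdot)$ applied to an affine function with a \emph{constant} coefficient. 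Now, exactly as in Lemma~\ref{comppairexpansion}, concavity of $H$ lets me replace each remaining $H$-term on the LHS by its chord (an affine lower bound) over the relevant interval, and each $H$-term on the RHS by the tangent at the interval's midpoint (an affine upper bound). Both sides then become affine in $(x,y)$ on $X_i\times Y_j$, so the strict inequality only needs to be verified at the (at most four) extreme points of the admissible sub-polygon.

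The main obstacle is purely quantitative: the monotonicity-in-$e$ surrogate loses $O(e_+-e_-)$ on the two problematic terms, and each chord/tangent replacement loses $O\bigl((x_+-x_-)^2+(y_+-y_-)^2\bigr)$, while the slack in the target inequality is modest. I expect a $1000\times1000$ grid, combined with the splitting at $C_1,C_3,C_5$, to give a uniform positive margin comparable to the $10^{-4}$ used in Lemma~\ref{comppairexpansion}; if not, one simply refines the grid. The verification then reduces to a finite enumeration over roughly $4\cdot10^6$ corner checks, in the style of \cite{1}.
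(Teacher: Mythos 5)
Your proposal is correct, and its skeleton (split $[0.21,0.48]$ at the breakpoints $C_1,C_3,C_5$ so that $e$ is affine on each piece, subdivide into a fine grid, bound the LHS entropies from below by chords and the RHS entropies from above by midpoint tangents, then check the resulting affine inequality at the corner points of each admissible sub-polygon, with a computer doing the $\sim 4\cdot 10^6$ evaluations) is exactly the paper's. The one genuine difference is your treatment of the two $e(x)$-dependent terms: you first freeze $e(x)$ to the endpoint values $e_-$, $e_+$ via the (correct) monotonicity of $e\mapsto e\,H(t/e)$, and only then linearize. The paper skips this step, and in fact your motivating claim --- that the surrogate inequality is ``not manifestly linear'' without freezing $e$ --- is not right: once $H$ is replaced by an affine function $H_\ast(t)=at+b$, one has $e(x)\,H_\ast\!\left(\frac{x}{e(x)}\right)=a\,x+b\,e(x)$ and $e(x)\,H_\ast\!\left(\frac{y}{e(x)}\right)=a\,y+b\,e(x)$, which are affine in $(x,y)$ on each piece where $e$ is affine; this is precisely why the paper restricts to the intervals $[0.21,C_1]$, $[C_1,C_3]$, $[C_3,C_5]$, $[C_5,0.48]$ and why its inequality \eqref{eq:compexp2} can be checked at corners directly (the interval bounds $I_2$, $I_6$ are computed with $e$ evaluated at the endpoints only to control the chord/tangent ranges, not to replace $e(x)$ itself). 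Your extra freezing step is sound (the bounds go in the right directions since $e$ is increasing, and $x<e(x_-)$, $y<e(x)$ hold on a fine grid), but it introduces an additional $O(e_+-e_-)$ loss per cell that the paper's approach avoids entirely; with the stated $1000$-fold subdivision this loss is of the same order as the $10^{-4}$ verification margin, so in practice you would likely need the grid refinement you mention, whereas the paper's version certifies the margin as is.
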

\begin{proof} 
We proceed in the same spirit as in the previous lemma. This time we verify the inequality on the intervals $[0.21,C_1]$, $[C_1,C_3]$, $[C_3,C_5]$, and $[C_5,0.48]$, where $e(x)$ is linear, separately. We divide each of the intervals evenly in 1000 sub-intervals $X_1, \dots, X_{1000}$. For each $X_i = [x^i_{min},x^i_{max}]$ we compute the minimum possible $y$ as $$\max(0,x^i_{min}-1+\delta)$$ which equals $0$ for considered $x$ and $\delta$ and maximal possible $y$ as $\min(x_{max},\delta)$ and divide the interval $$[0, \min(x_{max},\delta)]$$ evenly to 1000 sub-intervals $Y_1, \dots, Y_{1000}$. 

Then for each $Y_j = [y^j_{min},y^j_{max}]$ we compute bounds for the expressions
\begin{eqnarray*}
V_1 = x, \quad V_2 = \frac{x}{e(x)}, \quad V_3 = y \\
\quad V_4 = \frac{y}{\delta}, \quad  V_5=\frac{x-y}{1-\delta}, \quad V_6 = \frac{y}{e(x)}.
\end{eqnarray*}
For $i=1,3, 4,5$ these are the same as in Lemma \ref{comppairexpansion} and for $i=2, 6$ we set $I_2$ and $I_6$ respectively as
$$
\left[\min(x^i_{min}/e(x^i_{min}), x^i_{max}/e(x^i_{min})), \max(x^i_{min}/e(x^i_{min}), x^i_{max}/e(x^i_{min})) \right]$$
and
$$
\left[\frac{y^j_{min}}{e(x^i_{max})},\frac{y^j_{max}}{e(x^i_{min})} \right].
$$
All intervals are possibly truncated to $[0,1]$.

For $i=1,2,3$ we approximate $H(x)$ on $I_i$ from below and for $i=4,5,6$ from above as in Lemma \ref{comppairexpansion}.

The stronger inequality
\begin{align}
&H_1(x)(1-\Delta)+\Delta e(x) H_2\left( \frac{x}{e(x)} \right) + H_3(y) \nonumber\\
&< 
\delta H^4\left(\frac{y}{\delta}\right) + (1-\delta)H^5\left(\frac{x-y}{1-\delta} \right)+ e(x) H^6\left( \frac{y}{ e(x)} \right) \label{eq:compexp2}
\end{align}
is linear in $x$ and $y$ (note that we are inside one of the intervals $[0.21,C_1]$, $[C_1,C_3]$, $[C_3,C_5]$, and $[C_5,0.48]$), and thus can be checked only at extreme points of the domain $D \subset X_i \times Y_j$.

These are again (in the form $(x,y)$)
\begin{eqnarray*}
\left(\max(x^i_{min},y^j_{min}),y^j_{min}\right),\qquad (x^i_{max},y^j_{min}),\\
\left(\max(x^i_{min},y^j_{max}),y^j_{max}\right),\qquad (x^i_{max},y^j_{max}).
\end{eqnarray*}

Again a computer program \cite{1} checks for each of the $4 \cdot 10^6$ values that the left-hand side of (\ref{eq:compexp2}) is greater than the right-hand side by at least 0.0001. 

This is done for each of the intervals $[0.21,C_1]$, $[C_1,C_3]$, $[C_3,C_5]$, and $[C_5,0.48]$ which then concludes the proof.

\end{proof}

\end{document}